\theoremstyle{plain}
\newtheorem{theorem}{theorem}
\newtheorem{lemma}{lemma}
\renewcommand{\tablename}{Table }
\let\oldendproof\endproof
\def\endproof{\hfill$\blacksquare$\oldendproof}
\begin{document}
\title{On the Feasibility of Wireless Interconnects for High-throughput Data Centers}
\author{
Ahmad~Khonsari,~ 
Seyed Pooya~Shariatpanahi,~ 
Abolfazl~Diyanat,~ 
Hossein~Shafiei
}

\IEEEcompsoctitleabstractindextext{
\begin{abstract}
Data Centers (DCs) are required to be scalable to large data sets so as to accommodate ever-increasing demands of resource-limited embedded and mobile devices. Thanks to the availability of recent high data rate millimeter-wave frequency spectrum such as 60GHz and due to the favorable attributes of this technology, wireless DC (WDC) exhibits the potentials of being a promising solution especially for small to medium scale DCs. This paper investigates the problem of  throughput scalability of WDCs using the established theory of the asymptotic throughput of  wireless multi-hop networks that are primarily proposed for homogeneous traffic conditions. The rate-heterogeneous traffic distribution of a data center however, requires the asymptotic heterogeneous throughput knowledge of a wireless network in order to study the performance and feasibility of WDCs for practical purposes. To answer these questions this paper presents a lower bound for the throughput scalability of a multi-hop rate-heterogeneous network when traffic generation rates of all nodes are similar, except one node. We demonstrate that the throughput scalability of conventional multi-hopping and the spatial reuse of the above bi-rate network is inefficient  and henceforth develop a speculative 2-partitioning scheme that improves the network throughput scaling potentials. A better lower bound of the throughput is then obtained. Finally, we obtain the throughput scaling of an i.i.d. rate-heterogeneous network and obtain its lower bound. Again we  propose a speculative 2-partitioning scheme to achieve a network with higher throughput in terms of improved lower bound. All of the obtained results have been verified using simulation experiments. 
\end{abstract}

\begin{IEEEkeywords}
2-partitioning scheme,
Rate-heterogeneous Throughput,
 Scaling law, 
 Wireless Data Centers
\end{IEEEkeywords}
}

\maketitle

\IEEEdisplaynotcompsoctitleabstractindextext

\IEEEpeerreviewmaketitle

\section{Introduction}
Cloud computing has been employed to supply services in a diverse range from Infrastructure as a Service (IaaS) to Software as a Service (SaaS). Hardware, software and networking technologies enable prevalence
of cloud computing. Distributed storage which is manifested as Data Centers (DCs) is a key technology of providing high performance cloud services and is needed to be scalable to large data sets. Previous measurements studies highlight an increasing interest for cloud-based storage systems, revealing that remote storage space gains popularity among interested users by offloading the complexity of hardware management from the user devices \cite{Drago2012Inside}.




In order to enhance the performance of DCs plethora of researches have been done in different layers of the DC design. Motivated by the prevalence of millimeter-wave spectrum notably 60 GHz, we have witnessed studies that propose WDC as a promising solution for small to medium data centers due to its favorable attributes such as fast connectivity, reduced cable complexity and lower cost \cite{Cui2011Wireless,Zhang20113D,Halperin2011Augmenting,Kandula2009Flyways,Cui2013Data,Shin2012feasibility}.

Nowadays DCs handle hundreds to thousands of storage devices and thus designing a wire-free DC solicits the theoretical bounds of performance of such devices. This calls for the problem of performance scalability of WDCs. The problem of the asymptotic throughput of a wireless multi-hop network for unicast transmissions has been embarked in the seminal work of \cite{Gupta2000capacity} and followed by extensive researches under different assumptions on the network models. Almost, all of the previous studies target on homogeneous network nodes, until recently that a few studies investigate
some forms of heterogeneity. It has been demonstrated that the rate-heterogeneous traffic distribution of a DC is a challenging issue that requires special attention in order to construct high performance DCs \cite{Benson2010Understanding,Benson2010Network,Greenberg2009VL2}. This motivates our research to investigate the asymptotic throughput of a heterogeneous multi-hop network under non-uniform traffic model with high similarity to rate-heterogeneous traffic nature of DCs.
\subsection{Our Contribution}
In this paper, we investigate the throughput of WDC paradigm and its potential scalability through the following aspects.
\begin{enumerate}
\item In order to investigate the applicability of fully WDCs for practical DC applications that possess heterogeneous traffic behavior, we pose the question of scalability of WDCs with heterogeneous demand rates. We have found that for a network with $n$ nodes, when all nodes except one node generate similar data rates ($\eta_{n}(1,1, \ldots, g(n))$), the aggregate throughput  is lower bounded by order of  $\frac{\sqrt{n}}{g(n)}$, where $\eta_{n}$ is a common non-constant multiplicative factor appearing in the rate of all nodes.

\item Then we investigate a multi-hop network when traffic demands of all the nodes are different. Let $\mathbf{r}(n) = (r_1(n), r_2(n), \ldots , r_n(n))^{T} = \eta_{n} (\lambda_1, \lambda_2, \ldots , \lambda_n)^{T}$ be the random column vector of the arrival rates of  $n$ users, in which $\eta_{n}$ is the rate scaling factor  and  $\lambda_{i}$'s are  sequence of i.i.d. random variables drawn from a general distribution.   A lower bound for the throughput of such scenario is suggested as $\mathbb{E}\left\{\frac{T_1\sum_i{r_i}}{n\max_i{r_i}}\right\} $, where $T_{1}$ is the throughput of rate-homogeneous network obtained in Gupta and Kumar \cite{Gupta2000capacity}.

Previous observations reveal that traffic in DCs is ON/OFF in nature with properties that fit the heavy-tailed distribution \cite{Halperin2011Augmenting,Benson2010Network}. Motivating by this observation we show that the aggregate throughput of network with  heavy-tailed traffic distribution with parameter $\alpha$  scales like  $n^{1/2-1/\alpha}$, with high probability. This result indicates that the performance of WDCs under the mentioned traffic pattern does not scale with $n$ and thus is not competitive with nowadays wired DCs proposals..

\item Thus we propose a speculative 2-partitioning scheme so as to improve the performance of conventional multi-hopping and then obtain an improved scaling for its throughput. When heterogeneous traffic $\eta_{n}(1, 1, \ldots, g(n))$, and when $\sqrt{n}<<g(n)<<n$, the aggregate throughput of our proposed scheme improves    to $n/g(n)$.

\item Again and in order to be competitive with linear scalability of wired DCs, we employed a speculative 2-partitioning scheme for fully i.i.d. rate-heterogeneous traffic. The lower bound of the throughput is improved to  $n^{(\alpha^2+2\alpha-4)/(2\alpha^2+2\alpha)}$  when traffic rates follow heavy-tailed distribution with parameter $\alpha$.

\item Simulation of scaling law is tricky and most previous studies target only mathematical approaches \cite{Gupta2000capacity,Xie2004Network,Ozgur2007Hierarchical,Toumpis2008Asymptotic}. Few studies  attempt to simulate the throughput against network size \cite{Stuedi2008Modeling,Gunashekar2011Wireless,gupta2001experimental,JinyangLi2001}. However, except \cite{Stuedi2008Modeling,JinyangLi2001} they report the results only for low size networks and do not mention the difficulty of simulation of large networks. We designed a technique to simulate the rate and then using this technique conduct extensive simulations for different network sizes and traffic patterns to validate the network throughput under the above-mentioned scenarios.


\end{enumerate}

\subsection{Structure}
The rest of the paper is organized as follows. Section \ref{sec:Background} provides related work. In Section \ref{sec:SystemModel}, we describe the system model. Section \ref{sec:analysisM}  gives the asymptotic theoretical results of WDCs when the number of nodes grows . Section \ref{sec:Simulation}  presents simulation experiments and numerical results. Section \ref{sec:DiscussionFuture},  outlines directions of future research. Finally,  Section  \ref{sec:conclusion} summarizes our findings and outlines directions of future research.

\section{Related Work}
\label{sec:Background}
This section reviews the progress of research in two directions.
\subsection{Wireless Data Center (WDC)}
Wireless technologies that have been initially presented in \cite{Kandula2009Flyways} serve as a detour link between Top of Racks (TORs) in a DC to mitigate the congestion condition of switches and improve maximum transmission delays \cite{Halperin2011Augmenting,Zhou2012Mirror,Huang2013architecture}.
Besides using wireless link as a detour, the feasibility of completely wireless data centers have been investigated in research community so as to mitigate the high wiring costs, performance bottlenecks, and low resiliency to network failures of wired data centers.

\cite{Cui2011Wireless} attempts to address practical issues in realizing a WDC by proposing a hybrid wired/wireless architecture and scheduling wireless links in a distributed manner. The architecture has been modeled and an optimization problem has been formulated to schedule the links, and to trade off complexity for practicality a heuristic algorithm is presented.

Recently, a methodology for building wire-free data centers based on 60-GHz radio frequency (RF) technology has been presented. Exploring the design space demonstrates the potentials of fully WDCs with respect to some major performance measures \cite{Shin2012feasibility}.

Multiple-input multiple-output (MIMO) link design scheme for WDC applications has been studied in \cite{Katayama2012MIMO}. The impacts of  MIMO degrees of freedom  have been explored in a multi-node packet networking environment.

An architecture of DC is presented in \cite{Cui2013Data}, which incorporates wireless network cards to both servers and routers so as to exploit cooperative traffic and eliminate redundant traffic among servers. This scheme reduces link loads and increases the network throughput. Through experiment with prototype equipment the authors explore the use of 60 GHz wireless links to relieve hot-spots in oversubscribed DCs \cite{Halperin2011Augmenting}.

As a tradeoff between network performance and cable complexity, the authors in \cite{Huang2013architecture} proposed RF-HYBRID that employed wired-wireless collaborated hybrid DC architecture to have the best of both worlds.

\subsection{WDCs Scalability}
Gupta and Kumar in \cite{Gupta2000capacity}, initiated the research on wireless network throughput scaling, when nodes are randomly and independently distributed with equal rates under unicast traffic. They showed that each source-destination pair can achieve a bit rate on the order of $1/\sqrt{n\log n}$ when $n$ tends to infinity, resulting in $\Theta(\sqrt{n/\log n})$ aggregate throughput. They also showed the $\Theta(\sqrt{n})$ scaling for networks with arbitrary placement of nodes. In \cite{Toumpis2004Large,ElGamal2006Optimal}, strategies are proposed to achieve the same bound.

Franceschetti et al. in \cite{Franceschetti2007Closing}, removed the gap between the throughput of randomly located and arbitrarily located nodes, and showed the total throughput scales by $\Theta(\sqrt{n})$. 

What has been mentioned so far, is achieved by the assumption of no cooperation among nodes. Xie and Kumar in \cite{Xie2004Network} investigated the strategy of multi-hop wireless networks with nodes cooperation. \"{O}zg\"{u}r et al. in \cite{Ozgur2007Hierarchical}, proposed an order-optimal scheme, with  help of the distributed MIMO technique.

The  above mentioned works, consider wireless networks with homogeneous traffic assumption. Toumpis et al. in \cite{Toumpis2008Asymptotic}, studied the bounds on the throughput of wireless networks with $s$ sources and $s^d$ destinations ($0<d<1$) all with equal rates. Liu et al. in \cite{Liu2008Data}, extend this result by considering that not every node has data to send and not every node can be a destination. In \cite{Ji2010Capacity}, the authors extend \cite{Ozgur2007Hierarchical} for the case in which a destination node is the sink for $k$ source nodes, while the rest of the $s = n-k$ nodes participate in unicast sessions. Liu and Wang in \cite{Liu2011Capacity}, investigate the heterogeneous network with multicast and unicast traffic. In contrast to previous works, we consider $n$ unicast traffic sessions with different rates. The traffic models that we consider in this study are a better match with the traffic patterns of practical DCs and measurements studies corroborate this claim \cite{Halperin2011Augmenting}.

\section{System Model}
\label{sec:SystemModel}
In this section we describe the WDC Architecture, channel model and assumptions. A summary of notations used in the paper are listed in \tablename  \ref{tab:symbols}.

\begin{table}
\renewcommand{\arraystretch}{1.5}
\centering
\caption{Notations}
   	\begin{tabular}{lp{8.5cm}}
		\toprule
		Notation & Description \\
		\midrule
		$\mathbb{E}\{\Re\}$ & Expected value of random variable $\Re$ \\
		$f(\Re)$ & \textit{Probability Density Function} (PDF) of random variable $\Re$\\
		$F(\Re)$ & \textit{Cumulative Distribution Function} (CDF)  of random variable $\Re$\\
		$(.)^{T}$ & Transpose of a vector\\ 
		\midrule
		$n$ & Number of nodes \\
		$\mathbf{r}$ & Sources demand rate vector \\
		$r_{i}$ &  Demand rate of source $i$\\
		$T$ & Network Throughput (Network Capacity)\\
		\midrule
		$X_{j}(t)$ & The signal transmitted by node $j$ at time $t$  \\
		$Y_{i}(t)$ & The signal received by node $i$ at time $t$  \\
		$h_{j,i}(t)$ & Wireless channel gain from the node $j$ to the node $i$ \\
		$P_j$ & The power transmitted by the node $j$ \\
		$\gamma$ & The path loss exponent\\
		\bottomrule 
   	\end{tabular}
    \label{tab:symbols}
\end{table}

\subsection{WDC Architecture}
\label{subsec:WDCArch}
As depicted in \figurename \ref{fig:metshsjk}, the wireless nodes in WDC are deployed in a 2-dimensional mesh that has $n=n_{2}\times n_{1}$ nodes with $n_{l}$ nodes at dimension $l$, $1\leq l \leq 2$. The position of each node is indicated by a distinct 2-digit mixed-radix vector $[i_{2},i_{1}]$, where $i_{l}$, $1\leq i_{l}\leq n_{l}$, $1\leq l \leq 2$. A wireless link $(i,j)$, between nodes $i$ and $j$, $(i,j\in \{1, 2, \ldots, n\})$ exists in the WDC if, in the absence of any other transmission in the network, receiver $j$ is in the decode range of the transmitter $i$. Links are assumed to be symmetric. Thus in the 2-dimensional mesh configuration two nodes $\mathbf{i}=[i_{2},i_{1}]$ and $\mathbf{j}=[j_{2},j_{1}]$ are connected directly, iff there is an $l$, $1\leq l \leq 2$, such that $i_{l}=j_{l}\pm 1$ and $i_{k}=j_{k}$ for $1\leq k \leq 2$, $k\neq l$. Node $\mathbf{i}=[i_{2},i_{1}]$ is a corner node if $i_{l}\in \{1,n_{l}\}$ for all $1\leq l \leq 2$.

\begin{figure}
\centering
\includegraphics[width=0.5\linewidth]{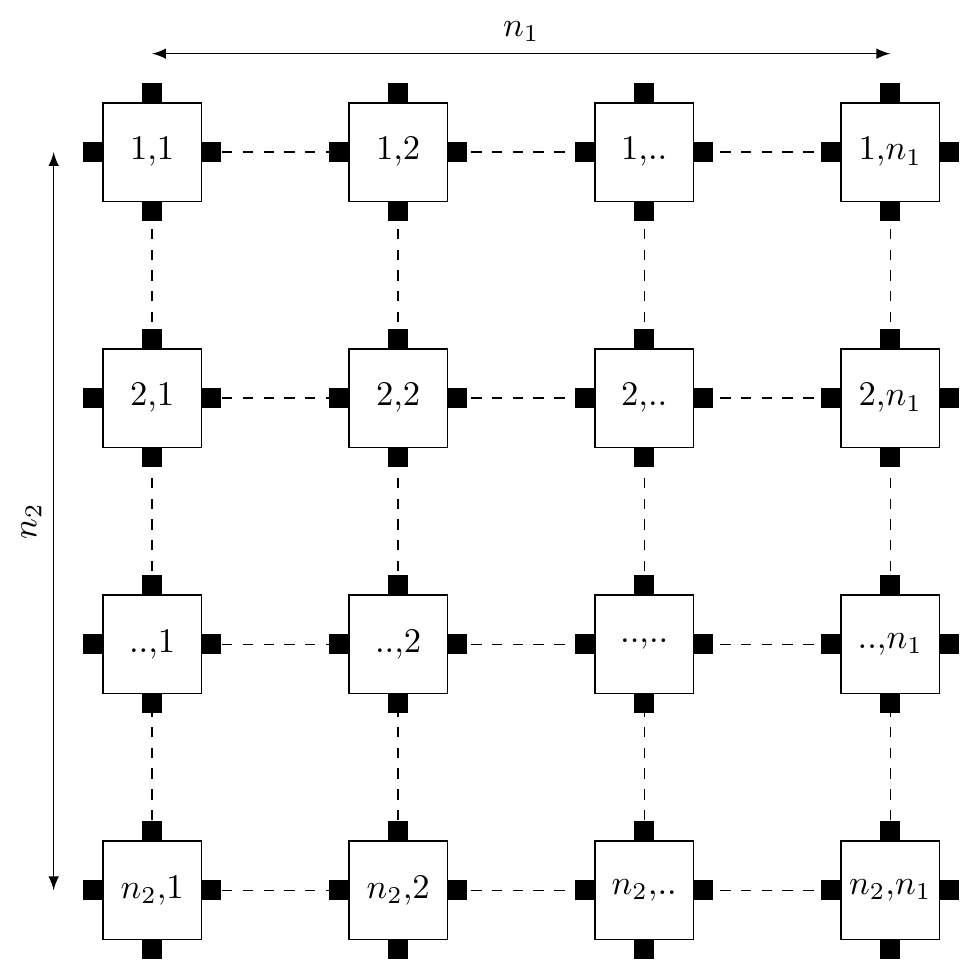}
\caption{Two dimensional mesh}
\label{fig:metshsjk}
\end{figure}
Node $\mathbf{i}=[i_{2},i_{1}]$ is a border node if there is $l$, $1\leq l \leq 2$, such that $i_{l}\in \{1,n_{l}\}$. Thus, in a 2-dimensional mesh excluding the corner and border nodes every node is neighbor with two nodes in each dimension resulting in a total of 4 neighbors. In this way each node includes processor, memory, storage elements and four 60 GHz directional antenna along with the interfacing circuits, or one 60 GHz omni-directional antenna. In order to save the cost  we may benefit from the space-reuse property of wireless communication and use only one omni-directional antenna per node. The details of the technique is presented in the following section.   The grid size of the mesh (i.e. the distance between two neighboring nodes in a dimension) is $d_{0}$ and only neighboring nodes are in the decode range of each other. Mesh is not a fully connected graph and thus the network is a multi-hop WDC; i.e. packets of end-to-end sessions may require passing through one or more intermediate nodes. 

\subsection{Channel Model}
Assume each node is a data source for exactly one other destination, and a data destination for exactly one other source, i.e. a unicast scenario. Thus for a network with size $n$  we have exactly  $n$ traffic sessions indexed by $i$, $1\leq i \leq n$, each with their end-to-end packet flows that need to be transported by the network. Session $i$ which is denoted by $SD_{i}$, has source node $S_{i}$ and destination node $D_{i}$.  Thus we have the following set of sessions (Source-destination pairs):
\begin{equation}
\label{Eq_Model_Pairs}
\{SD_i, i=1,\dots,n\}
\end{equation}
Source-destination pairs are randomly sampled from the set of all possible source-destination pairs using a spatial uniform distribution. $r_{i}(n)$  is the arrival rate (demand rate) of session $i$. Let $\mathbf{r}(n)=(r_{1}(n), r_{2}(n), \ldots, r_{n}(n))^{T}$ be the column vector of the  arrival rates of the sessions $SD_{1}$ to $SD_n$. Worth to mention that the rate vector in scaling regime depends on the network size and due to this fact we present the rate vector as $\mathbf{r}(n)$. For the simplicity, in what follows we drop the dependency of the rate vector $\mathbf{r}(n)$ to  $n$ from rate vector  i.e. we use $\mathbf{r}=(r_{1}, r_{2}, \ldots, r_{n})^{T}$ instead of $\mathbf{r}(n)=(r_{1}(n), r_{2}(n), \ldots, r_{n}(n))^{T}$. Each session uses X-Y routing to reach its destination, which is  described in Section \ref{subsec:rateasdsldls}.  Network throughput is defined as the sum of all source-destination pair rates:
\begin{equation}\label{Eq_Model_throughput}
	T=\sum_{i=1}^n{r_i}
\end{equation}
\begin{figure}
\centering
\includegraphics[width=0.4\linewidth]{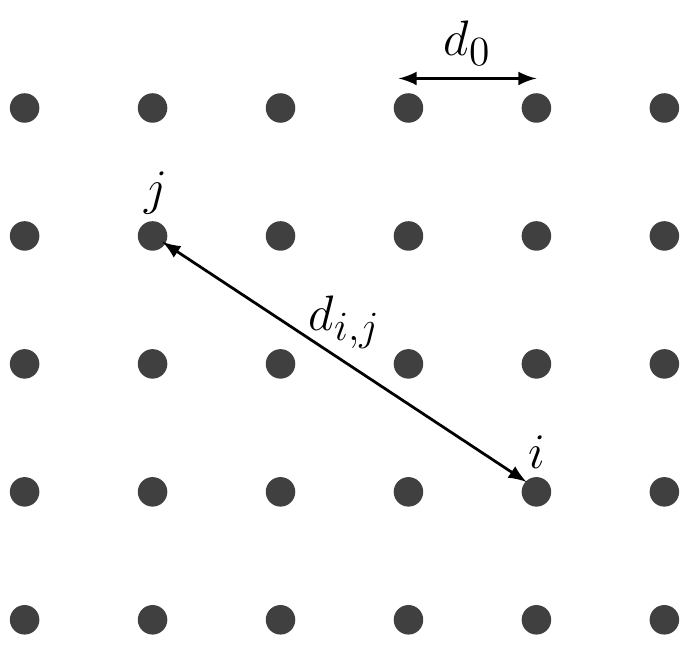}
\caption{Grid Network.\label{Fig_Grid_1}}
\end{figure}
Let $d_{i,j}$ be the Euclidean distance of the nodes $i$ and $j$ (see \figurename \ref{Fig_Grid_1}), $X_{j}(t)$ be the signal transmitted by node $j$ at time $t$, $Y_{i}(t)$ be the signal received by node $i$, $h_{j,i}(t)$ be the wireless channel gain from the node $j$ to  node $i$ and  $z_i(t)$ be the Additive White Gaussian Noise (AWGN) with the power $N_0$, at the time $t$.

Multiple sessions may transmit at the same time and thus the signal received at the node $i$ can be viewed as the sum of the desired signal, interference, and noise \cite{tse2005fundamentals}.
\begin{equation}
\label{Eq_Model_Superposition_2}
	Y_i(t)=h_{j,i}(t)X_j(t)+ \sum_{k \neq i,j}h_{k,i}(t)X_k(t)+z_i(t)
\end{equation}
We adopted the physical model and thus, in order to have a successful transmission from the node $j$ to the node $i$, we should have the following Signal to Interference plus Noise Ratio (SINR) satisfaction constraint\footnote{For notational simplicity we drop the time index of the variables when clear from the context.}:
\begin{equation}
\label{Eq_Model_SINR}
SINR_{j \rightarrow i}=\frac{|h_{j,i}|^2P_j}{N_0+\sum_{k\neq i,j} {|h_{k,i}|^2P_k}} \geq \beta
\end{equation}
where $P_j$ is the power transmitted by the node $j$ at that moment, and $\beta$ is a constant threshold. Also, we assume a Line of Sight (LoS) model for the wireless channel as follows:
\begin{equation}\label{Eq_Model_Channel}
	h_{j,i}=\frac{\exp{(\sqrt{-1}\theta_{j,i})}}{d_{j,i}^{\gamma/2}}
\end{equation}
in which $\theta_{j,i}$ is uniformly distributed on $[0,2\pi)$, and $\gamma$ is the path-loss exponent. We assume $P_{j}=P=cte$ for all $j$.

\subsection{Assumptions}
\label{subsec:Assumptions}
The analysis is based on assumptions that have been widely used in the literature \cite{Halperin2011Augmenting,Gupta2000capacity}:
\begin{itemize}
\item The traffic pattern for each session is unicast, i.e. each source only sends to one destination node and each receiver is targeted by only one source node. Thus, all sessions are one to one and for a network with size $n$ we have $n$ sessions.
\item The network is divided into smaller $3\times 3$ sub-meshes. The time is divided into frames  and each frame is slotted using  9-TDMA Medium Access Control (MAC) as data link layer protocol in each sub-mesh. This technique enables us to use an omni-directional antenna per node instead of four directional antennas and hence preserves the cost. 
\item The message in each node is divided into packets which is being transmitted in the corresponding  time slot of the node in 9-TDMA frame. 
\end{itemize}
We consider three different rate scenarios as explained in Section \ref{subsec:rateasdsldls} to \ref{sec:sdjfekdfke}.

It is noteworthy to mention that the asymptotic throughput, although are obtained for 2-dimensional mesh, is also valid for any deployment of nodes such that in a gridded area only one node belongs to a grid \cite{Weil2007Spatial}. Essentially, the results are also valid for the deployment of nodes that follows a Poisson point process on $\mathbb{R}^n$ for all dimensions $n\geq 1$, i.e., a spatial Poisson process. Thus, If  $v(A)$ is the volume of $A\subset \mathbb{R}^n$,then the number of points in $A$ follows Poisson distribution with mean $\lambda v(A)$.

\section{Analysis}
\label{sec:analysisM}
In what follows we derive the asymptotic throughput bound for different network scenarios and propose speculative 2-partitioning that improves the throughput of the network and then obtain the new bounds for such improved configurations.
\subsection{Rate-homogeneous throughput}
\label{subsec:rateasdsldls}
Let's denote the uniform demand rate vector as:
\begin{equation}\label{Eq_UThourghput_Rate_Vector}
	\mathbf{r}=\eta_{n}(1,\dots,1)
\end{equation}
This means that all  source-destination pairs demand rates are equal to $\eta_n$. Considering the interference imposed by concurrent transmission of multiple sessions, we aim to find the maximum feasible $\eta_{n}$ so as to attain the maximum network throughput. Clearly, since source-destination pairing is assumed to be a random event, the network throughput is a random variable. The next theorem characterizes the scaling of the average network throughput  (It should be noted that this theorem reviews a well-known result in the literature originally proved in \cite{Gupta2000capacity}):
\begin{theorem}
	One can achieve the average throughput $\mathbb{E}\{T\}=\Theta(\sqrt{n})$, for the uniform rate vector in \eqref{Eq_UThourghput_Rate_Vector}, through multi-hopping.
\end{theorem}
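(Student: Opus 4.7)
The plan is to adapt the classical Gupta--Kumar argument \cite{Gupta2000capacity} to the present mesh model with the 9-TDMA schedule and X-Y routing that the paper has already committed to. The proof naturally splits into a converse (upper bound on $\eta_n$) and a constructive achievability (lower bound on $\eta_n$), and each source--destination pair contributes rate $\eta_n$ so the network throughput is $T=n\eta_n$.

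First I would pin down the per-link capacity. Under the 9-TDMA MAC over $3\times 3$ sub-meshes, a given transmitter is active in one slot out of nine, and the nearest simultaneously active interferer lies at Euclidean distance at least $3d_0$. Plugging the channel model \eqref{Eq_Model_Channel} into the SINR constraint \eqref{Eq_Model_SINR}, the interference sum $\sum_k |h_{k,i}|^2 P$ is bounded by a convergent series in $1/d^{\gamma}$ for $\gamma>2$, so each active link supports a constant rate $W$, giving an aggregate transport capacity of $\Theta(n)$ bit-meters per slot.

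Next I would estimate the expected route length. Since source--destination pairs are sampled uniformly on the $\sqrt{n}\times\sqrt{n}$ grid and sessions follow X-Y routing, the expected hop count of a session is $\Theta(\sqrt{n})$ (the expected $\ell_1$ distance between two i.i.d.\ uniform grid points). Summing over sessions, the total traffic the network must carry is $\mathbb{E}\left\{\sum_i r_i H_i\right\}=\eta_n\cdot n\cdot \Theta(\sqrt{n})=\Theta(\eta_n n^{3/2})$. Equating this with the available $\Theta(n)$ aggregate link capacity forces $\eta_n = O(1/\sqrt{n})$ and hence $\mathbb{E}\{T\}=O(\sqrt{n})$.

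For the matching lower bound I would argue that, by the symmetry of uniform source--destination sampling and the symmetric structure of the grid under X-Y routing, the expected number of sessions whose route uses any particular link is $\Theta(\sqrt{n})$. A standard Chernoff/concentration argument (or a direct deterministic bound using the maximum cut of a mesh, as in \cite{Gupta2000capacity}) shows that every link carries at most $\Theta(\sqrt{n})$ sessions with high probability. Setting $\eta_n=\Theta(1/\sqrt{n})$ then loads each link at a constant fraction of its capacity $W/9$, and the schedule is feasible. Combining the two directions yields $\mathbb{E}\{T\}=\Theta(\sqrt{n})$. The step I expect to be the delicate one is the load-balancing/concentration argument: ensuring that no link becomes a bottleneck carrying $\omega(\sqrt{n})$ routes, since otherwise a single hot-spot link would force $\eta_n$ down. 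In the grid with X-Y routing this is the classical ``edge expansion'' / cut-counting bound, and it is what ultimately ties the converse and achievability into a tight $\Theta(\sqrt{n})$.
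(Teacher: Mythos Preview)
Your proposal is correct and follows essentially the same Gupta--Kumar adaptation the paper uses: 9-TDMA gives $\Theta(n)$ concurrent successful one-hop transmissions (the paper's Lemma~\ref{lemma:ninetdmas}), X-Y routes have expected length $\Theta(\sqrt{n})$ (Lemma~\ref{Lemma:twouniformly}), and the resource balance $n\eta_n\Theta(\sqrt{n})\leq\Theta(n)$ yields $\mathbb{E}\{T\}=\Theta(\sqrt{n})$. If anything, your write-up is more careful than the paper's sketch: the paper only makes the aggregate stability inequality explicit and then asserts that equality is attainable, whereas you correctly isolate the achievability direction as requiring a per-link load bound (no edge carries $\omega(\sqrt{n})$ routes) and point to concentration/cut-counting for it---a step the paper leaves implicit by deferring to \cite{Gupta2000capacity}.
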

\begin{proof}
Here we present a simple proof sketch. For a regirous proof refer to \cite{Gupta2000capacity}. First we explain the elements of the scheme achieving this throughput briefly:

\textbf{MAC Protocol:}
\begin{figure}
\centering
\includegraphics[width=0.50\linewidth]{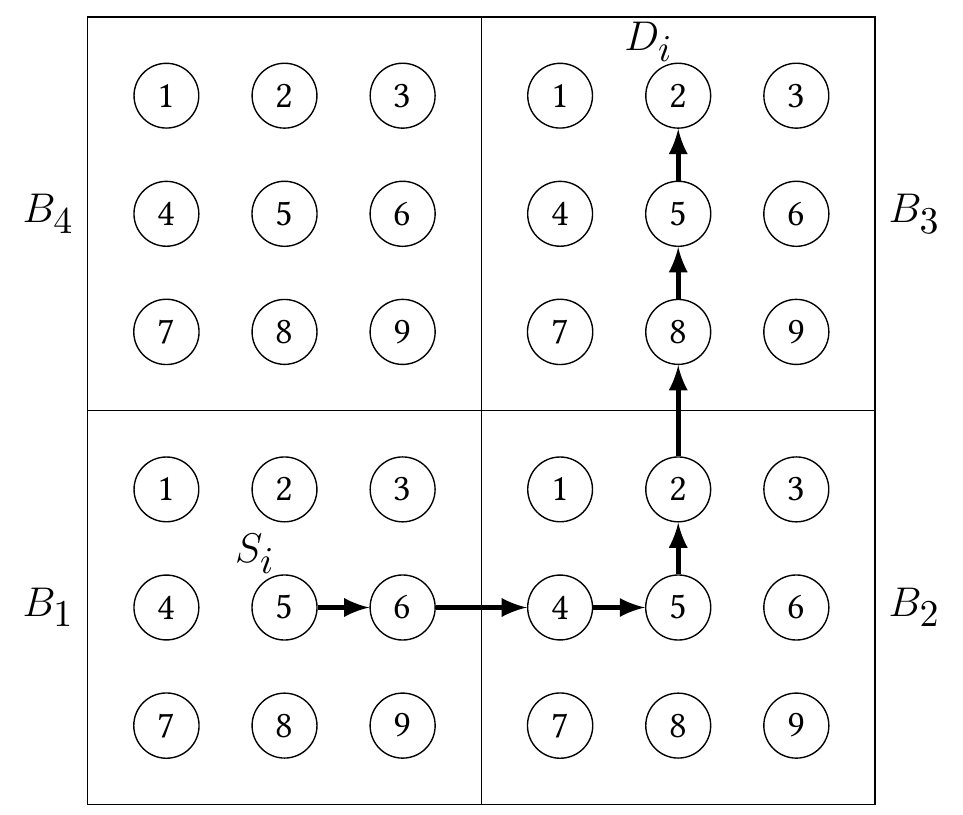}
\caption{9-TDMA Spatial Reuse MAC.\label{Fig_Grid_2}}
\end{figure}
Divide the network into square cells each containing 9 nodes. As depicted in \figurename \ref{Fig_Grid_2} the 36 nodes network is divided into 4 square blocks, namely $B_1$ to $B_4$. Inside each $B_i$, label the nodes from 1 to 9.  In each time slot only the nodes with the same index as that time slot number are allowed to be active. This MAC protocol is called the 9-TDMA spatial reuse MAC. For this MAC protocol one can see the following fact:
\begin{lemma}
If the path loss exponent $\gamma>2$, then with the 9-TDMA spatial reuse MAC, $\Theta(n)$ nodes can simultaneously transmit their message to their neighbor nodes, while all being successful.
\label{lemma:ninetdmas}
\end{lemma}
\begin{lemmaproof}
For proof we refer to \cite{Gupta2000capacity}.

\end{lemmaproof}

\textbf{Routing Protocol:}
Consider the source-destination pair $SD_i$ between source $S_{i}$ and destination $D_{i}$  in Fig. \ref{Fig_Grid_2}. We employ X-Y routing and wormhole switching to transfer traffics between each source and destination \cite{Dally2003Principle}. They have been widely used in contemporary parallel machines due to their desirable properties such as low buffering and minimal hardware requirements allowing efficient and fast router implementation \cite{Dally2003Principle}. In X-Y routing the packet is routed to destination one dimension at a time. Within each dimension the packet travels in the preferred direction (shortest distance) until it reaches the same coordinate of destination in that dimension. 

In the example in \figurename \ref{Fig_Grid_2} the packet is firstly routed in dimension X until it reaches the proper coordinate in this dimension and then starts routing in the preferred direction in dimension Y until reaches its destination $D_{i}$. We assume that transmissions have the slotted operation and that the nodes are synchronized. In   each slot one packet can be transferred in the preferred direction on each link. Using 9-TDMA the timing diagram of the transmission of the example in \figurename \ref{Fig_Grid_2} is illustrated in \tablename \ref{table:prevnexthop}. This diagram achieves  the highest throughput scaling since it employs the reuse throughput of the network.  In terms of delay this method is not optimum.   For this routing and MAC one can prove the following lemma:
\begin{table}
\centering
\renewcommand*{\arraystretch}{1.2}
\caption{Sequence of packet transmissions of the example in \figurename \ref{Fig_Grid_2}}
\label{table:prevnexthop}
\begin{tabular}{|c|c|}\hline
Time Slot & Prev Hop (Block No.) $\longrightarrow$ Next Hop (Block No.) \\\hline
5 & 5 ($B_1$) $\longrightarrow$ 6 $(B_1$) \\
15 & 6 ($B_1$) $\longrightarrow$ 4 $(B_2$)\\
22 & 4 ($B_2$) $\longrightarrow$ 5 $(B_2$)\\
32 & 5 ($B_2$) $\longrightarrow$ 2 $(B_2$)\\
38 & 2 ($B_2$) $\longrightarrow$ 8 $(B_3$)\\
44 & 8 ($B_3$) $\longrightarrow$ 5 $(B_3$)\\
50 & 5 ($B_3$) $\longrightarrow$ 2 $(B_3$)\\\hline
\end{tabular}
\end{table}
\begin{lemma}
	For the $X$-$Y$ routing described above, average number of hops needed for a packet to reach its destination is at least $\Theta(\sqrt{n})$.
	\label{Lemma:twouniformly}
\end{lemma}
\begin{lemmaproof}
See appendix \ref{append:Lemma:twouniformly}.

\end{lemmaproof}

Using lemma \ref{Lemma:twouniformly} we find that the total required rate of packet transmissions in the network is $n\mathbb{E}\{\eta_{n}\}\Theta(\sqrt{n})$. Using lemma \ref{lemma:ninetdmas} and in order to have a stable network, we should have that the total required rate of packet transmissions is equal or less  than $\Theta(n)$. i.e.
\begin{equation}\label{Eq_UThourghput_Stability_Condition}
\Theta(n)\geq (\mathbb{E}\{\eta_{n}\} \times n) \times \Theta(\sqrt{n})
\end{equation}
which results in
\begin{equation}\label{Eq_UThourghput_Final_Throughput}
	\mathbb{E}\{T\} = n \times \mathbb{E}\{\eta_{n}\} \leq \Theta(\sqrt{n})
\end{equation}
It means that all $\mathbb{E}\{T\} \leq \Theta(\sqrt{n})$ are achievable with this multi-hopping scheme.

\end{proof}

\subsection{Heterogeneous-rate Throughput}
\label{subsec:Heterogeneous}
In this section we assume that different source destination pairs require different end-to-end transmission rates.

\subsubsection{Heterogeneous-rate with one dissimilar rate}
\label{subsubsec:1gn}
First suppose the following rate vector:
\begin{equation}
	\mathbf{r}=\eta_{n}(1,1,\dots,1,g(n))
	\label{eq:ggggnrate}
\end{equation}
which shows a scenario where all the pairs have equal rate, except the last one, whose rate is  $g(n)$ times greater than the other nodes. For this example we have the following result:
\begin{theorem}
With the  multi-hopping scheme, and the rate vector indicated in \eqref{eq:ggggnrate}, the network throughput scales as:
\begin{equation}
T=\Omega\left(\frac{\sqrt{n}}{g(n)}\right)
\end{equation}
\label{theorem:sldsdwwdazz}
\end{theorem}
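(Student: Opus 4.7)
The strategy is to reuse exactly the 9-TDMA spatial-reuse MAC and the X-Y wormhole routing that deliver $\Theta(\sqrt{n})$ aggregate throughput in Theorem~1, and merely to recalibrate the scalar factor $\eta_n$ so that the single heavy flow with demand $\eta_n g(n)$ remains within the per-source capacity. Because Lemma~\ref{lemma:ninetdmas} and Lemma~\ref{Lemma:twouniformly} are statements about the scheme and the topology, not about the rate vector, both still apply verbatim in the heterogeneous case.

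The first step is to write down two feasibility constraints on $\eta_n$. Per-source/per-link: every node transmits at most a constant number of packets per slot under 9-TDMA, so the heavy source must satisfy $\eta_n g(n)=O(1)$, i.e.\ $\eta_n=O(1/g(n))$. Aggregate packet-hop budget: by Lemma~\ref{Lemma:twouniformly} the total packet-hop load per slot is $\eta_n(n-1+g(n))\cdot\Theta(\sqrt{n})$, and by Lemma~\ref{lemma:ninetdmas} this cannot exceed $\Theta(n)$, giving $\eta_n=O\!\bigl(\sqrt{n}/(n+g(n))\bigr)$.

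Next, I would exhibit the explicit choice $\eta_n=\Theta\!\bigl(1/(g(n)\sqrt{n})\bigr)$ and verify that it meets both constraints: the per-source bound holds since $\eta_n g(n)=1/\sqrt{n}\to 0$, and the aggregate bound holds since $\eta_n\sqrt{n}(n+g(n))=(n+g(n))/g(n)\leq 2n$ for every $g(n)\geq 1$. Summing the rates at this feasible operating point yields
\begin{equation*}
T=\eta_n\,(n-1+g(n))\;\geq\;\frac{n-1}{g(n)\sqrt{n}}\;=\;\Omega\!\left(\frac{\sqrt{n}}{g(n)}\right),
\end{equation*}
which is the desired lower bound.

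The one point that needs care, and the main potential obstacle, is ensuring that no intermediate relay on the heavy flow's X-Y route becomes an additional hidden bottleneck beyond what the per-source and aggregate budgets already encode. I would handle it by observing that every relay on the heavy source's path forwards the heavy stream at rate $\eta_n g(n)=O(1/\sqrt{n})$, safely below the per-link service rate of 9-TDMA, while the remaining traffic those relays carry is already covered by the aggregate constraint. Hence the proposed $\eta_n$ is genuinely feasible and the $\Omega(\sqrt{n}/g(n))$ scaling follows.
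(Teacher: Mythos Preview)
Your proposal is correct and lands on the same operating point $\eta_n=\Theta\bigl(1/(g(n)\sqrt{n})\bigr)$ as the paper, but the paper's argument is considerably shorter: it simply pads every light source with dummy packets up to rate $\eta_n g(n)$, thereby reducing the heterogeneous instance to the uniform case of Theorem~1. That single reduction immediately yields $n\,\eta_n g(n)\le\Theta(\sqrt{n})$, hence $\eta_n=\Theta\bigl(1/(g(n)\sqrt{n})\bigr)$, and the real throughput $\eta_n(n-1+g(n))=\Omega(\sqrt{n}/g(n))$ drops out. Your direct feasibility analysis has to argue per-link and per-relay load explicitly (your last paragraph), and the phrase ``already covered by the aggregate constraint'' is slightly loose---the aggregate packet-hop budget is necessary, not sufficient; what actually saves you is that the $n-1$ light flows inherit the uniform per-link load bound from Theorem~1. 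The dummy-packet reduction gets all of this for free by handing feasibility back to the already-proved uniform result as a black box. The trade-off: your route makes the two constraints (per-source cap and packet-hop budget) visible and separable, which is instructive, while the paper's route is a one-line reduction.
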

\begin{proof}
$n-1$ nodes need the required rate of order $\eta_{n}$ and the remaining one node need the required rate of order $\eta_{n}g(n)$. Thus, we have to wait for the latter node to finish its transmitting task. So the aggregate throughput for this scenario is same as the scenario in which all nodes have $\eta_{n}g(n)$ required rate. Notice that  we assume that when each node finishes it's packets, sends dummy packets to it's destination.

\end{proof}

\subsubsection{Heterogeneous-rate with general i.i.d. distribution}
Now we consider a  rate vector  that appears in practical DCs (\cite{Halperin2011Augmenting,Benson2010Network}). Let $\mathbf{r} = (r_1, r_2, \ldots , r_n)^{T} = \eta_{n} (\lambda_1, \lambda_2, \ldots , \lambda_n)^{T}$ be random column vector of the arrival rates of  $n$ users. $\lambda_{i}$s are a sequence of i.i.d. random variables drawn from a general distribution.  Then we will have the following theorem  for the multi-hopping scheme introduced in the last section.

\begin{theorem}
	If the throughput for the network with uniform traffic is equal to $T_1$, then the throughput for the network with non-uniform traffic is lower bounded by:
\begin{equation}\label{Eq_Non_UThourghput_Main_Theorem}
\mathbb{E}\{T_2\}=\Omega\left(\mathbb{E}\left\{\frac{T_1\sum_i{r_i}}{n\max_i{r_i}}\right\}\right) =\Omega\left( \mathbb{E}\left\{\frac{T_1\sum_i{\lambda_i}}{n\max_i{\lambda_i}}\right\}\right)
\end{equation}
As an example, for the heavy-tailed distribution as given by \cite{ross2009Introduction}.
\begin{equation}\label{Eq_Non_UThourghput_HT_pdf}
	F_\lambda(x)=1-\frac{1}{x^\alpha}, x \geq 1
\end{equation}
the throughput is lower bounded by
\begin{equation}
\Omega\left(n^{1/2-1/\alpha}\right)
\label{eq:omegaleftn}
\end{equation}
\label{theorem:sdowdlwdw}
\end{theorem}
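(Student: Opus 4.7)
The strategy is to extend the bottleneck argument from Theorem~\ref{theorem:sldsdwwdazz} to i.i.d.\ rate vectors, then specialize to the Pareto case using standard law-of-large-numbers and extreme-value estimates. For the first inequality, I would exploit the fact that the 9-TDMA MAC plus X-Y routing scheme of Theorem 1 treats all sessions symmetrically: consequently, any heterogeneous rate vector $\eta_n(\lambda_1,\ldots,\lambda_n)^T$ is admissible whenever the homogeneous vector in which every entry is replaced by $\eta_n \max_i \lambda_i$ is admissible (the same ``wait for the largest demand'' observation used in Theorem~\ref{theorem:sldsdwwdazz}). Theorem 1 gives the latter admissibility exactly when $n \cdot \eta_n \max_i \lambda_i$ is at most a constant times $T_1$, so the largest feasible choice of $\eta_n$ is of order $T_1/(n \max_i \lambda_i)$. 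Plugging this into $T_2 = \eta_n \sum_i \lambda_i$ yields a \emph{pathwise} lower bound $T_2 = \Omega(T_1 \sum_i \lambda_i /(n \max_i \lambda_i))$, and taking expectation on both sides produces \eqref{Eq_Non_UThourghput_Main_Theorem}.

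For the heavy-tailed specialization, the task reduces to lower bounding $\mathbb{E}\{\sum_i \lambda_i / \max_i \lambda_i\}$ under the Pareto distribution \eqref{Eq_Non_UThourghput_HT_pdf}. I would estimate numerator and denominator separately. For $\alpha>1$ the mean $\mathbb{E}[\lambda]=\alpha/(\alpha-1)$ is finite, so the law of large numbers gives $\sum_i \lambda_i \ge c_1 n$ with high probability. For the maximum, $\Pr\{\max_i \lambda_i \le x\} = (1-x^{-\alpha})^n$, and evaluating at $x=c_2 n^{1/\alpha}$ shows $\max_i \lambda_i \le c_2 n^{1/\alpha}$ with high probability. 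On the intersection of these two typical events, the ratio is at least $(c_1/c_2)\,n^{1-1/\alpha}$; since the intersection has probability tending to one, this contribution alone yields $\mathbb{E}\{\sum_i \lambda_i / \max_i \lambda_i\} = \Omega(n^{1-1/\alpha})$. Combining with $T_1 = \Theta(\sqrt{n})$ from Theorem 1 delivers $T_2 = \Omega(n^{1/2-1/\alpha})$, which is \eqref{eq:omegaleftn}.

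The main technical obstacle is precisely the passage from typical-scale behavior of the sum and the max to a bound on the expectation of their ratio, since expectations do not commute with ratios in general; the truncation-on-the-typical-event step above is essential, and writing it out carefully (choosing the constants $c_1,c_2$ so that the ``bad'' event contributes negligibly, using the deterministic lower bound $\sum_i \lambda_i / \max_i \lambda_i \ge 1$ on that event) is the only delicate part. A minor caveat is the regime $\alpha \le 1$, where the Pareto mean is infinite and the sum is itself dominated by its maximum; but in that range $1/2-1/\alpha$ is non-positive, so the $\Omega$-statement is uninformative and no separate treatment is required.
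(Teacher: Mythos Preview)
Your derivation of the general lower bound \eqref{Eq_Non_UThourghput_Main_Theorem} is essentially identical to the paper's: the paper phrases the ``replace every rate by the max'' step as letting all sources pad with dummy packets until the heaviest session finishes, then computes $t_{tot}=\max_i t_i$ and $T_2=\sum_i W_i/t_{tot}$, arriving at the same pathwise identity $T_2=T_1\sum_i\lambda_i/(n\max_i\lambda_i)$ before taking expectations.

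Where you diverge is in the heavy-tailed specialization. The paper separates the ratio as
\[
\frac{\mathbb{E}\{T_2\}}{n^{1/2-1/\alpha}}=c_1\,\mathbb{E}\!\left\{\frac{\sum_i\lambda_i/n}{\max_i\lambda_i/n^{1/\alpha}}\right\},
\]
handles the numerator via a CLT-type concentration statement (their Lemma~3, which as written requires $\alpha>2$ for finite variance), and handles the denominator by invoking the Fr\'echet extreme-value limit $F_\lambda^n(b_nx)\to\exp(-x^{-\alpha})$ to compute $\mathbb{E}\{n^{1/\alpha}/\max_i\lambda_i\}$ exactly as a convergent integral (their Lemma~4). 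Your route is more elementary: a high-probability upper bound on $\max_i\lambda_i$ directly from $(1-x^{-\alpha})^n$, a WLLN lower bound on $\sum_i\lambda_i$, and a truncation on the intersection using the deterministic floor $\sum_i\lambda_i/\max_i\lambda_i\ge1$ on the complement. This avoids any limit-distribution machinery and, since you only need a finite first moment for the LLN step, it extends the argument to $\alpha>1$ rather than $\alpha>2$. One small imprecision: with a fixed threshold $c_2$, the event $\{\max_i\lambda_i\le c_2 n^{1/\alpha}\}$ has probability tending to $\exp(-c_2^{-\alpha})<1$, not to $1$; but a positive limiting probability is all you need for the $\Omega$-bound, so the conclusion stands.
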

\begin{proof}
	Define the total data volume transmitted from each source to its corresponding destination as in the following vector
\begin{equation}\label{Eq_Non_UThourghput_Main_Theorem_Data_Volume_Vector}
	(W_1,\dots,W_n)=c_0\eta_{n}(\lambda_1,\dots,\lambda_n)
\end{equation}
where $c_0$ is a positive constant. In other words, the source-destination pair $SD_i$ wants to transfer $W_i$ bits of data. Again we assume that when each node has finished its packets, it sends dummy packets. Then, if we define $t_i$ as the time needed for this transfer, we will have:
\begin{equation}\label{Eq_Non_UThourghput_Ind_Time}
	t_i=\frac{W_i}{T_1/n}=\frac{c_0r_i n}{T_1} = \frac{c_0\eta_{n}\lambda_{i} n}{T_1} 
\end{equation}
The whole process is finished when the data for all the source-destination pairs is transferred. Since the transfer process for different pairs is in parallel, the total time will be:
\begin{equation}\label{Eq_Non_UThourghput_Tot_Time}
	t_{tot}=\max_i{t_i}=\frac{c_0\eta_{n}n}{T_1}\max_i{\lambda_i}
\end{equation}
and thus the network throughput will be
\begin{equation}\label{Eq_Non_UThourghput_Th_Proof_1}
T_2=\frac{\sum_i{W_i}}{t_{tot}}=\frac{T_1\sum_i{\lambda_i}}{n\max_i{\lambda_i}}
\end{equation}
and by considering the expected value of  \eqref{Eq_Non_UThourghput_Th_Proof_1} and dummy assumption, we will have \eqref{Eq_Non_UThourghput_Tot_Time}.
For the distribution in \eqref{Eq_Non_UThourghput_HT_pdf}, we will have
\begin{eqnarray}\label{Eq_Non_UThourghput_Th_Proof_2}
	\frac{\mathbb{E}\{T_2\}}{n^{1/2-1/\alpha}} &=& \mathbb{E} \left\{ \frac{\left(T_1/n^{1/2}\right)\left(\sum_i{\lambda_i}/n\right)}{\left(\max_i{\lambda_i}/n^{1/\alpha}\right)} \right\}  \\ \nonumber
&\stackrel{(a)}=& c_1 \mathbb{E} \left\{ \frac{\left(\sum_i{\lambda_i}/n\right)}{\left(\max_i{\lambda_i}/n^{1/\alpha}\right)} \right\} \\ \nonumber
&\stackrel{(b)}=& c_2 \mathbb{E} \left\{ \frac{1}{\left(\max_i{\lambda_i}/n^{1/\alpha}\right)} \right\} \\ \nonumber
&\stackrel{(c)}=& c_3
\end{eqnarray}
which proves \eqref{eq:omegaleftn}. In above (a) follows from the fact that throughput of the uniform case is a random variable independent of the non-uniform rate vector, and its average scales as $\Theta(\sqrt{n})$. The identity (b) is due to the following lemma:
\begin{lemma}
	Consider i.i.d. random variables $\lambda_1,\dots,\lambda_n$ with the c.d.f. indicated in \eqref{Eq_Non_UThourghput_HT_pdf}. Define $X=\left(\sum_i{\lambda_i}/n\right)$. Then for $\alpha>2$ we have:
\begin{eqnarray}\label{Eq_Non_UThourghput_Lemma_1}
	\mathbb{E}\{X\}&=&\mathbb{E}\{\lambda_1\} \\ \nonumber
	Var\{X\}&\rightarrow&0.
\end{eqnarray}
\end{lemma}
\begin{lemmaproof}
This is a simple application of the Central Limit Theorem (CLT).

\end{lemmaproof}
For the identity (c) we have used the following lemma.
\begin{lemma}
\label{lemma:inordertoprove}
\begin{equation}\label{Eq_Non_UThourghput_Lemma_2}
	\mathbb{E} \left\{ \frac{1}{\left(\max_i{\lambda_i}/n^{1/\alpha}\right)} \right\}=cte.
\end{equation}
\end{lemma}
\begin{lemmaproof}
See appendix \ref{append:inordertoprove}.

\end{lemmaproof}
\end{proof}

\subsection{Enhanced throughput of network with heterogeneous rate vector}
\label{sec:sdjfekdfke}
The result in the previous section demonstrates that for conventional multi-hopping and the spatial reuse,  the heterogeneity of traffic demands   deteriorates the throughput.  Thus it is necessary to improve the throughput of wireless DCs so as to be employed as a potential communication candidate for practical purposes. We speculatively conjecture that the throughput of heterogeneous traffic pattern is upper bounded by homogeneous traffic pattern under conventional multi-hopping and the spatial reuse. As we argue in Section \ref{subsec:Heterogeneous} and as bad news we may not improve the  throughput scaling
with methods such as improving routing and MAC protocols. As good news, however, we may shift the  throughput scaling toward its homogeneous upper bound by speculatively 2-partitioning the traffic demands in two parts (part one and part two) as delineated below.

\subsubsection{Enhanced throughput of network with one dissimilar rate}
The first group consists of those with lower data rate and the second group consists of the pairs with higher data rate.
\begin{theorem}
Suppose $n-1$ nodes  required rate of order $\eta_{n}$ and the remaining node has  the required rate of order $\eta_{n}g(n)$.  By using our proposed method, the aggregate throughput calculates as follow:
\begin{equation}
T = \left\{
\begin{array}{ll}
\Theta\left(\sqrt{n}\right) &  g(n) << \sqrt{n} \\*[5pt]
\Theta\left(\frac{n}{g(n)}\right) &  \sqrt{n}<< g(n) << n \\
\end{array}
\right.
\end{equation}
\label{theorem:soddlwldwd}
\end{theorem}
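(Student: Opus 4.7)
The plan is to prove the theorem in two directions. First I would exhibit the 2-partitioning scheme and show it attains the claimed rates (lower bound), then produce a matching converse in each regime. For the scheme, every long frame is time-shared between two sub-phases: during a fraction $\tau$ of the frame the $n-1$ ordinary pairs are served by the 9-TDMA plus X-Y routing construction of Theorem 1, while during the remaining fraction $1-\tau$ the whole network is dedicated to the single dissimilar pair, whose packets are pipelined along its X-Y route.

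Two feasibility constraints must then be balanced. The ordinary sub-phase inherits the Theorem 1 ceiling, so $(n-1)\eta_n \le c_1 \tau \sqrt{n}$, i.e.\ $\eta_n \le c_1 \tau/\sqrt{n}$. For the dissimilar sub-phase the chain of $\Theta(\sqrt{n})$ hops carries a single flow with no other contending traffic, so a pipelined schedule of the 9-TDMA colours of Lemma \ref{lemma:ninetdmas} along that chain forwards one packet every constant number of slots, giving a sustainable rate $c_2=\Theta(1)$ and the time-share constraint $\eta_n g(n) \le c_2 (1-\tau)$. Equalizing the two constraints yields the optimal $\tau^{\star} = \Theta\!\bigl(\sqrt{n}/(\sqrt{n}+g(n))\bigr)$ and $\eta_n = \Theta\!\bigl(1/(\sqrt{n}+g(n))\bigr)$; substituting into $T=(n-1+g(n))\eta_n$ gives $T=\Theta(\sqrt{n})$ when $g(n)\ll\sqrt{n}$ and $T=\Theta(n/g(n))$ when $\sqrt{n}\ll g(n)\ll n$.

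For the converse in the first regime I would invoke the Gupta--Kumar ceiling: the total link-slot usage $\sum_i r_i h_i = \Theta(\sqrt{n})\cdot T$ cannot exceed the overall budget $\Theta(n)$ supplied by Lemma \ref{lemma:ninetdmas}, whence $T=O(\sqrt{n})$. In the second regime that bound is not tight, so I would use a bottleneck argument instead: each of the $\Theta(\sqrt{n})$ relay nodes on the dissimilar pair's X-Y route must carry at least $\eta_n g(n)$ units of its traffic, yet by Lemma \ref{lemma:ninetdmas} every node transmits in only a constant fraction of slots, forcing $\eta_n g(n)=O(1)$ and therefore $T=(n-1+g(n))\eta_n = O(n/g(n))$.

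The main obstacle is the $\Theta(1)$ per-slot rate claim for the isolated dissimilar pair across an $\Omega(\sqrt{n})$-hop path, which rests on exhibiting the pipelined schedule along the route. Because only this one flow is active during the dedicated fraction, the spatial-reuse pattern of Lemma \ref{lemma:ninetdmas} can be read off so that consecutive hops activate in consecutive slots of the 9-TDMA frame, and the pipeline incurs only an $O(\sqrt{n})$ warm-up that is negligible on the long-term rate scale. The remaining pieces --- the scalar optimization in $\tau$ and the per-regime arithmetic --- are routine calculations.
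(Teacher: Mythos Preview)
Your achievability argument is essentially the paper's proof in different clothing. The paper runs the two sub-phases sequentially and computes throughput as total volume over total time,
\[
T=\frac{\eta_n\bigl((n-1)+g(n)\bigr)}{\dfrac{\eta_n(n-1)}{\sqrt{n}}+\eta_n g(n)}
=\frac{(n-1)+g(n)}{\sqrt{n}+g(n)\cdot\dfrac{n-1}{n-1}}\,,
\]
and then reads off the two regimes. Your time-sharing parametrization with $\tau$ and the optimization $\tau^\star=\Theta\bigl(\sqrt{n}/(\sqrt{n}+g(n))\bigr)$ is exactly the same computation; $\tau$ and $1-\tau$ are just the fractions $t_1/(t_1+t_2)$ and $t_2/(t_1+t_2)$ in the paper's sequential picture. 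The ``pipelined $\Theta(1)$ rate along the X--Y chain'' observation is likewise implicit in the paper's statement that the second phase has throughput~$1$.

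Where you go beyond the paper is in supplying a converse. The paper does \emph{not} prove the upper bound: it notes after the proof that the dummy-packet technique makes both this theorem and the next one only lower bounds, the $\Theta$ notwithstanding. Your regime-1 converse via the Gupta--Kumar transport budget is standard and fine. For regime~2, your argument works, but tying it to ``the $\Theta(\sqrt{n})$ relay nodes on the dissimilar pair's X--Y route'' is stronger than needed and slightly scheme-specific: it suffices to observe that the source (or destination) of the dissimilar pair is a single node and, by Lemma~\ref{lemma:ninetdmas}, can inject at most a constant rate, forcing $\eta_n g(n)=O(1)$ regardless of routing. That cleaner bottleneck gives the $O(n/g(n))$ ceiling for any multi-hop scheme, not just the proposed one.
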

\begin{proof}
At the end of network time, $\eta_{n}((n-1)+g(n))$ packets are transmitted by network nodes. In the first part of 2-partitioning scheme, $n-1$ nodes transmit $\eta_{n}(n-1)$ packets with throughput $\sqrt{n}$. In the next part, one node send $\eta_{n}g(n)$ packet with throughput $1$. So we have: 
\begin{equation}
T = \frac{\eta_{n}((n-1)+g(n)
	)}{\frac{\eta_{n}(n-1)}{\sqrt{n}} + \eta_{n}g(n)}
\label{eq:aospapasws}
\end{equation}
Thus, we have:
\begin{equation}
T = \lim\limits_{n\longrightarrow\infty} \frac{n+g(n)}{\frac{n}{\sqrt{n}} + g(n)} =  \Theta\left(\frac{n}{g(n)}\right)\quad \quad \sqrt{n}<< g(n) << n
\end{equation}
For $g(n)<<\sqrt{n}$,  \eqref{eq:aospapasws} tends to $\sqrt{n}$ when $n$ grows to infinity.

\end{proof}

\subsubsection{Enhanced throughput of network with general i.i.d. distribution}
Let  $\lambda_{i}$'s be i.i.d. random variables.  We denote $\lambda_{(1)}$, \ldots, $\lambda_{(n)}$ as the ordered sequence of the $\lambda_{i}$'s. We define $\eta_{n} \lambda_{(1)}$ and $\eta_{n} \lambda_{(n)}$  as the lowest and  the highest demand rates, respectively. Thus the rates are ordered as:
\begin{equation}\label{Eq_Enhanced_Sort}
	\eta_{n}\lambda_{(1)} \leq \dots \leq \eta_{n} \lambda_{(n)}
\end{equation}
We divide the rates into two sets. 
\begin{eqnarray}\label{Eq_Enhanced_Low_High}
	G_{low}&=&\left\{\eta_{n}\lambda_{(1)},\dots, \eta_{n}\lambda_{(r)}\right\}  \\ \nonumber
	G_{high}&=&\left\{\eta_{n}\lambda_{(r+1)},\dots, \eta_{n}\lambda_{(n)}\right\}
\end{eqnarray}
 The first and the second sets have $r$  and $m\triangleq n-r$ members, respectively. Below we discuss about index $r$ that maximizes the throughput. Giving the optimal index we obtain two sets that we transfer the traffics of the sessions that have members in  $G_{high}$ after transferring of the traffics  of the sessions in  $G_{low}$ . It is noteworthy to mention that the traffic of the sessions in each set are transfered according to the conventional  multi-hopping scheme presented in \cite{Gupta2000capacity}.  The following theorem formulates an optimization problem to find index $r$ which maximizes the throughput for the heavy-tailed distribution. The theorem is general enough that we may find the maximum throughput for any other probability distribution functions of the rate vectors.

\begin{theorem}
	In the 2-partitioning scheme the network throughput will be lower bounded by
	\begin{eqnarray}\label{Eq_Enhanced_Main_Theorem}
T_3&=&\Omega\left(\max_m{\mathbb{E}\left\{\frac{c_0n}{c_1\sqrt{n-m}\lambda_{(n-m)}+c_2\sqrt{m}\lambda_{(n)}}\right\}} \right)
	\end{eqnarray}
and for the distribution in \eqref{Eq_Non_UThourghput_HT_pdf}, it will result in:
\begin{equation}
\Omega\left( n^{(\alpha^2+2\alpha-4)/(2\alpha^2+2\alpha)} \right)
\end{equation}
$c_{0}$, $c_{1}$ and $c_2$ are constant. 
\label{theorem:msdsdsdsd}
\end{theorem}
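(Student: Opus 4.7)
The plan is to carry the dummy-packet/stability argument of Theorem \ref{theorem:sdowdlwdw} across the two groups separately. Under the 2-partitioning scheme, $G_{low}$ is emptied first and then $G_{high}$, each operated with the Gupta--Kumar multi-hop $+$ 9-TDMA scheme of Section \ref{subsec:rateasdsldls} restricted to its own $n-m$ or $m$ source--destination pairs. Because source--destination pairing is spatially uniform and the partition into groups depends only on the rates $\lambda_i$ (which are independent of the positions), each group inherits a spatially uniform pairing. Hence the per-group aggregate throughputs are $\Theta(\sqrt{n-m})$ and $\Theta(\sqrt{m})$ by Theorem 1 applied inside the group, so that the per-node service rate is of order $1/\sqrt{n-m}$ in $G_{low}$ and $1/\sqrt{m}$ in $G_{high}$.

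Next I would repeat the dummy-packet reasoning inside each group: the time to finish is set by the worst session in that group. This gives
\begin{equation*}
t_{low} \;=\; \frac{c_0\,\eta_n\,\lambda_{(n-m)}}{1/\sqrt{n-m}} \;=\; c_1\,\eta_n\,\sqrt{n-m}\,\lambda_{(n-m)},
\qquad
t_{high}\;=\;c_2\,\eta_n\,\sqrt{m}\,\lambda_{(n)} .
\end{equation*}
The total number of bits transferred is $c_0\eta_n\sum_i\lambda_i$, which by the Law of Large Numbers (same step used in identity (b) of Theorem \ref{theorem:sdowdlwdw}) is $\Theta(\eta_n n)$ whenever $\mathbb{E}\{\lambda_1\}<\infty$ (true here since $\alpha>2$). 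Dividing total data by $t_{low}+t_{high}$, taking expectation, and then maximizing over the partition point $m$ yields the bound \eqref{Eq_Enhanced_Main_Theorem}.

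To obtain the closed-form exponent I would specialize to the Pareto tail \eqref{Eq_Non_UThourghput_HT_pdf}. A standard extreme-value computation extending Lemma \ref{lemma:inordertoprove} gives $\lambda_{(n)}=\Theta(n^{1/\alpha})$ and, more generally, $\lambda_{(n-m)}=\Theta\!\bigl((n/m)^{1/\alpha}\bigr)$ for $1\le m\le n$ (e.g.\ via $F^{-1}$ applied to the $(n-m)$-th order statistic of i.i.d.\ uniforms). Substituting,
\begin{equation*}
T_3 \;=\; \Omega\!\left(\frac{n}{\sqrt{n-m}\,(n/m)^{1/\alpha}+\sqrt{m}\,n^{1/\alpha}}\right),
\end{equation*}
and writing $m=n^\beta$, balancing the two denominator terms gives $\tfrac12+\tfrac{1-\beta}{\alpha}=\tfrac{\beta}{2}+\tfrac{1}{\alpha}$, so the optimum is $m^\star=\Theta\!\bigl(n^{\alpha/(\alpha+2)}\bigr)$. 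Plugging $m^\star$ back produces the exponent of the form stated in the theorem (up to a small algebraic reduction of $(\alpha^2+2\alpha-4)/[2\alpha(\alpha+2)]$).

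The main obstacle, in my view, is the order-statistic step: turning the heuristic $\lambda_{(n-m)}\sim(n/m)^{1/\alpha}$ into bounds on $\mathbb{E}\{1/(\sqrt{n-m}\,\lambda_{(n-m)}+\sqrt{m}\,\lambda_{(n)})\}$ strong enough that the expectation and the $\max_m$ can be exchanged with the deterministic balancing argument. One clean route is the Rényi representation $\lambda_{(n-k+1)}\stackrel{d}{=}F^{-1}\!\bigl(1-\tfrac{E_1+\dots+E_k}{E_1+\dots+E_{n+1}}\bigr)$ for standard exponentials $E_i$, which yields concentration of both order statistics around their typical scales and lets one control the expectation uniformly in $m$. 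A secondary issue worth flagging is that the 2-partitioning protocol implicitly assumes a genie that knows the rate ordering in advance; this does not affect the lower bound on throughput but should be noted as a modelling assumption.
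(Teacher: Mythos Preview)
Your proposal is correct and follows essentially the same route as the paper: split into two phases, use the dummy-packet/worst-session argument in each, invoke the $\Theta(\sqrt{\cdot})$ sub-network throughput (the paper isolates this as a separate lemma, proved by a thinning argument that each cell of the original grid retains at least one node w.h.p.), replace $\sum_i\lambda_i$ by $c_0 n$ via the law of large numbers, and then balance the two denominator terms to obtain $m^\star=n^{\alpha/(\alpha+2)}$. The only substantive difference is the tool for the intermediate order statistic: the paper quotes Falk's (1989) central-limit-type result for $\lambda_{(n-m)}$ with $m\to\infty$, $m/n\to 0$, which directly gives $a_n=F^{-1}(1-m/n)=(n/m)^{1/\alpha}$ and $b_n/a_n\to 0$, whereas you propose the R\'enyi representation; both lead to the same conclusion, and your remark that the paper's substitution is somewhat heuristic (the expectation/$\max_m$ exchange is not justified in detail there either) is fair.
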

\begin{proof}
Define the vector of data volume for the pairs in $G_{low}$ as follows
\begin{equation}\label{Eq_Enhanced_Main_Theorem_Proof_1} 	\left\{W_{(1)},\dots,W_{(r)}\right\}=c_0\left\{\eta_{n}\lambda_{(1)},\dots,\eta_{n}\lambda_{(r)}\right\}
\end{equation}
and for $G_{high}$ as
\begin{equation}\label{Eq_Enhanced_Main_Theorem_Proof_2}	\left\{W_{(r+1)},\dots,W_{(n)}\right\}=c_0\left\{\eta_{n}\lambda_{(r+1)},\dots,\eta_{n}\lambda_{(n)}\right\}
\end{equation}
Then, for the time needed for each of the pairs to conclude its transmission, for the members of $G_{low}$, we will have:
\begin{equation}\label{Eq_Enhanced_Main_Theorem_Proof_3}
	t_{(i)}=\frac{W_{(i)}}{T_r/r},\quad i=1,\dots,r
\end{equation}
and for $G_{high}$ we will have:
\begin{equation}\label{Eq_Enhanced_Main_Theorem_Proof_4}
	t_{(i)}=\frac{W_{(i)}}{T_m/m},\quad i=r+1,\dots,n
\end{equation}
Since each phase finishes when all the members of that phase are successful in transferring their data, and the transmissions in each phase are concurrent, then, the time needed for concluding the first and second phase are:
\begin{eqnarray}\label{Eq_Enhanced_Main_Theorem_Proof_5}
	t_1&=&\max_{i=1,\dots,r}t_{(i)} \\ \nonumber
 &=&c_0\eta_{n}\frac{r\lambda_{(r)}}{T_r} \\ \nonumber
	t_2&=&\max_{i=r+1,\dots,n}t_{(i)} \\ \nonumber
&=&c_0\eta_{n}\frac{m\lambda_{(n)}}{T_m}
\end{eqnarray}
Thus, for the network throughput we will have:
\begin{eqnarray}\label{Eq_Enhanced_Main_Theorem_Proof_6}
	T_3&=&\Omega\left(\frac{W_{tot}}{t_{tot}}\right) \\ \nonumber
	&=&\Omega\left(\frac{\sum_i{W_i}}{t_1+t_2} \right)\\ \nonumber	&=&\Omega\left(\frac{\sum_i{\lambda_i}}{\left(r\lambda_{(r)}/T_r\right)+\left(m\lambda_{(n)}/T_m\right)}\right)
\end{eqnarray}

When we form the networks in phase one and two, we make the original grid thinner uniformly. That is because the pairs with lowest (highest) required data rate are distributed inside the network uniformly. Thus, the two sub-networks in two phases are uniformly distributed in space, and their aggregate throughput scales with the square root of the number of nodes. Thus, we can put $T_r=\sqrt{r}/c_1$ and $T_m=\sqrt{m}/c_1$ in \eqref{Eq_Enhanced_Main_Theorem_Proof_6}. This assertion is made precise in the following lemma:
\begin{lemma}
Consider the original grid consisting of $n$ nodes. Select $m$ of these nodes uniformly randomly, and ignore the other nodes ($m \rightarrow \infty$). Then the throughput of this new network is of order $\Theta(\sqrt{m})$.
\label{Lemma:considertheoriginal}
\end{lemma}
\begin{lemmaproof}
See appendix \ref{append:considertheoriginal}.

\end{lemmaproof}
In addition, we can asymptotically put $\sum_i{\lambda_i} \sim c_0n$ which will result in:
\begin{eqnarray}\label{Eq_Enhanced_Main_Theorem_Proof_7}	T_3&=&\Omega\left(\frac{\sum_i{\lambda_i}}{\left(c_1\sqrt{r}\lambda_{(r)}\right)+\left(c_1\sqrt{m}\lambda_{(n)}\right)}\right) \\ \nonumber
&=&\Omega\left(\frac{c_0 n}{\left(c_1\sqrt{n-m}\lambda_{(n-m)}\right)+\left(c_1\sqrt{m}\lambda_{(n)}\right)}\right)
\end{eqnarray}
and finally we should choose the design parameter $m$ such that the throughput is maximized resulting in
\begin{eqnarray}\label{Eq_Enhanced_Main_Theorem_Proof_8}	
T_3=\Omega\left(\max_m{\mathbb{E}\left\{\frac{c_0n}{c_1\sqrt{n-m}\lambda_{(n-m)}+c_1\sqrt{m}\lambda_{(n)}}\right\}}\right)
\label{eq:aopamdoddk}
\end{eqnarray}
For the heavy-tailed distribution in \eqref{Eq_Non_UThourghput_HT_pdf}, from lemma \ref{lemma:inordertoprove}, we know that $\lambda_{(n)} \rightarrow n^{1/\alpha}$ as $n\longrightarrow\infty$. Also we have the following lemma
\begin{lemma}
\label{Lemma:assumethatFalk1989}
\begin{equation}\label{Eq_Enhanced_Main_Theorem_Lemma_Proof_2}	
 \lambda_{(n-m)} \rightarrow \left(\frac{n}{m}\right)^{1/\alpha} \quad \text{as $n\longrightarrow\infty$}
\end{equation}
\end{lemma}
\begin{lemmaproof}
See appendix \ref{append:assumethatFalk1989}.

\end{lemmaproof}
So, we will have:
\begin{align}\label{Eq_Enhanced_Main_Theorem_Proof_9}	
T_3&  = \Omega\left(\frac{c_0n}{c_1\sqrt{n-m}(n/m)^{1/\alpha}+c_1\sqrt{m}n^{1/\alpha}} \right)\\ \nonumber
& \sim \Omega\left( \frac{c_0n}{c_1\sqrt{n}(n/m)^{1/\alpha}+c_1\sqrt{m}n^{1/\alpha}}\right)
\end{align}
The optimization problem in \eqref{eq:aopamdoddk} can be handled in a simple manner as follows. Since the time needed for the two phases exhibit a trade-off behavior which can be managed by changing $m$ (i.e. increasing $m$ will decrease the time of the first phase, and will increase the time of the second phase, simultaneously. ), it can be seen that, the best choice of $m$ happens when  we try to make the time needed for the two phases equal (in the scaling sense). Thus we will set
\begin{equation}\label{Eq_Enhanced_Main_Theorem_Proof_10}	
	c_1\sqrt{n}(n/m)^{1/\alpha}=c_1\sqrt{m}n^{1/\alpha}
\end{equation}
which will result in
\begin{equation}\label{Eq_Enhanced_Main_Theorem_Proof_11}	
	m=n^{\alpha / (\alpha+2)}
\end{equation}
by putting \eqref{Eq_Enhanced_Main_Theorem_Proof_11} into \eqref{Eq_Enhanced_Main_Theorem_Proof_9} (and after some simple calculations) we will have:
\begin{equation}\label{Eq_Enhanced_Main_Theorem_Proof_12}	
T_3\sim \Omega\left( n^{(\alpha^2+2\alpha-4)/(2\alpha^2+2\alpha)} \right)
\end{equation}

\end{proof}

It should be noted that, in the proof of theorems \ref{theorem:soddlwldwd} and \ref{theorem:msdsdsdsd}, we have used the technique of introducing dummy information, and hence, the results serve as lower bounds for the network throughput. As mentioned above we proved that 2-partitioning of traffic demands improves the  throughput for network with heterogeneous rate vector. We devised the optimal 2-partitioning of the rates to two parts. An interesting line of research that looks NP hard is seeking for a potentially k- partitioning of the rates to k parts such that attains true optimal improvement of the  throughput.

\section{Simulation}
\label{sec:Simulation}
We carried out extensive simulations to evaluate our proposed approaches using OMNET++ \cite{Varga1999Using} and MATLAB. Numerous validation experiments have been established. However, for the sake of specific illustration, validation results are presented for limited number of scenarios. We adopted 95 percent confidence level to make sure that, on average, the confidence interval which is calculated using $t$-student distribution and standard error contains the true values around 95 percent of the time. Moreover, Box plots are presented in the figures to illustrate detailed
behavior of the simulation experiments. The median, the 25th and 75th percentiles are
depicted. To give additional information about the spread of the results we put vertical lines above and below of each box to denote the the 9th percentile and the 91st percentile.  Outliers are plotted as individual points \cite{ross2009Introduction}.

\subsection{Throughput for network with  homogeneous rate vector}
\begin{figure*}
\begin{subfigure}[b]{0.48\textwidth}
\centering
\includegraphics[width=\linewidth]{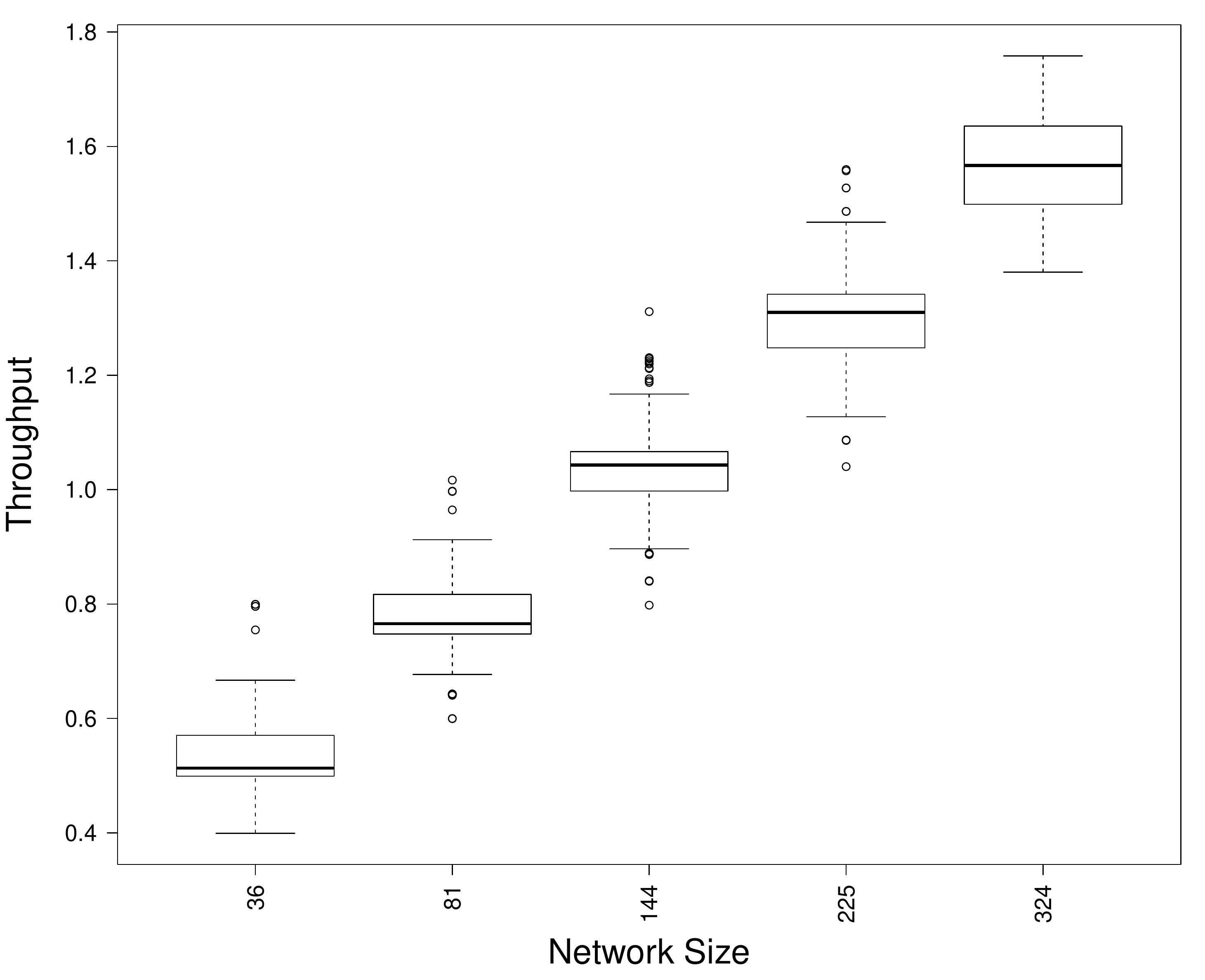}
\caption{}
\label{fig:boxplotThroughput}
\end{subfigure}
\begin{subfigure}[b]{0.48\textwidth}
\centering
\includegraphics[width=\linewidth]{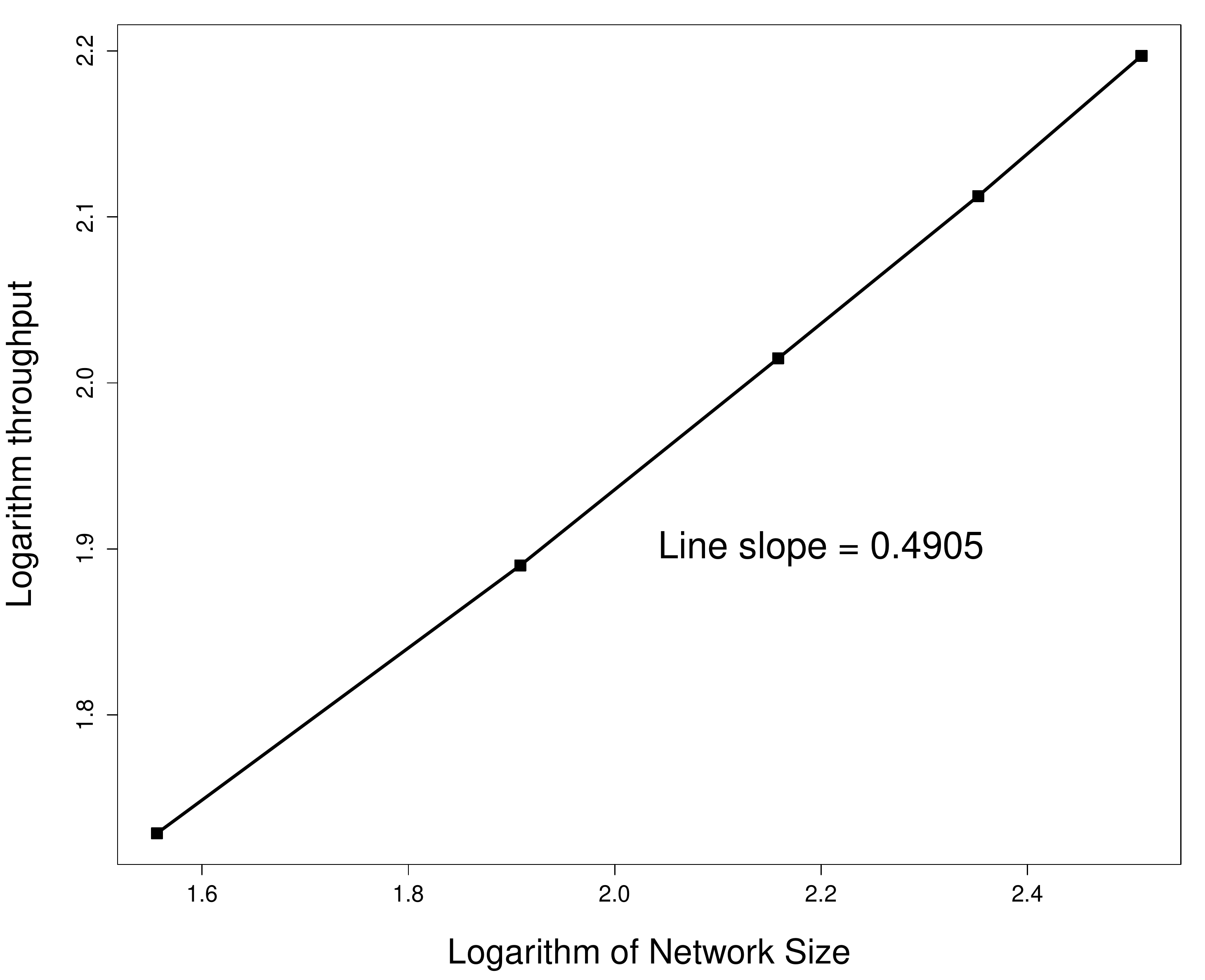}
\caption{}
\label{fig:AverageThroughput}
\end{subfigure}
\begin{subfigure}[b]{0.48\textwidth}
\centering
\includegraphics[width=\linewidth]{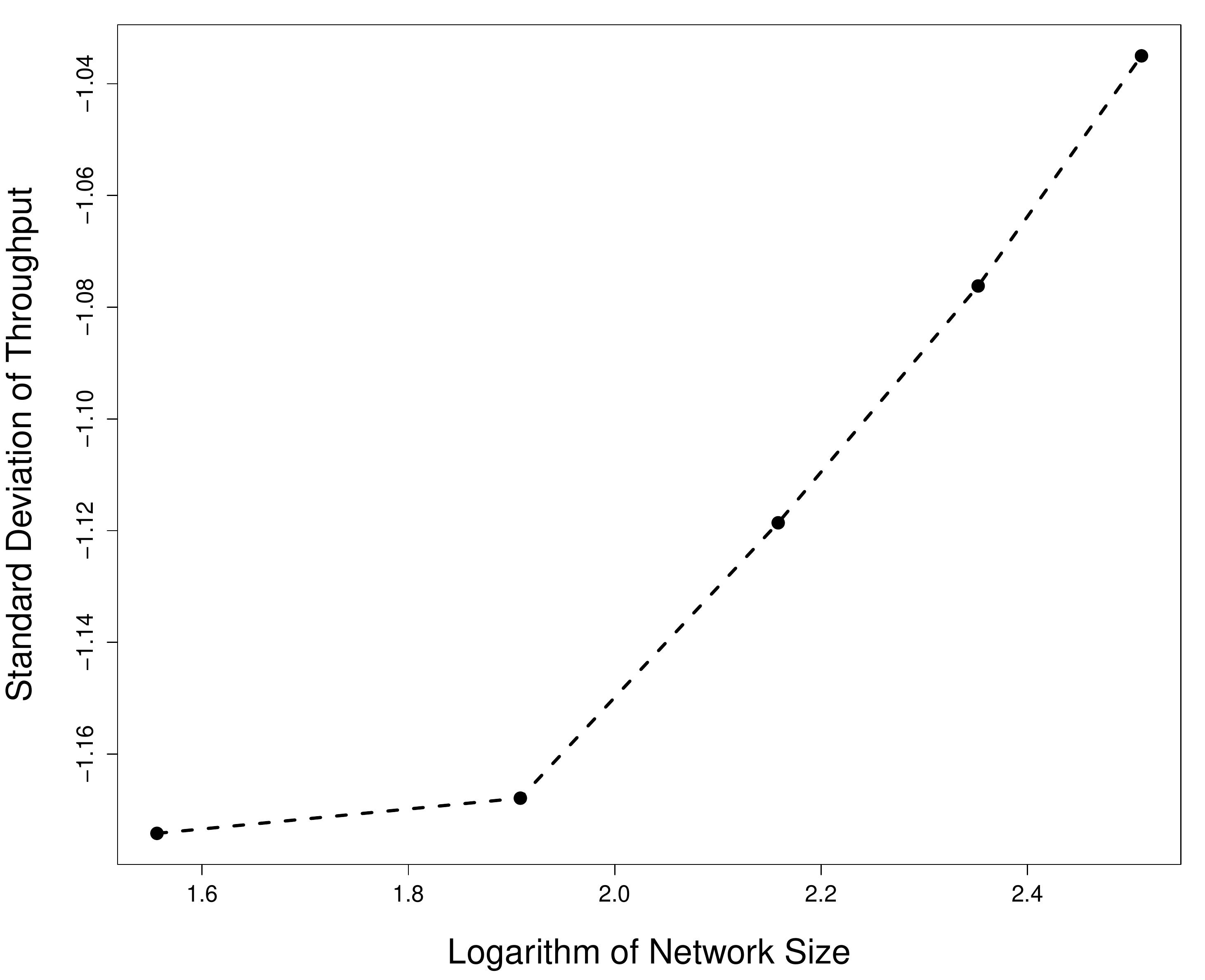}
\caption{}
\label{fig:varianceThroughput}
\end{subfigure}
\caption{Validation of all simulation experiments against the theoretical results in \cite{Gupta2000capacity} for different network sizes.  (b) Log-Log plot of throughput against network size (c) The simulation results of the standard deviation rate-homogeneous scaling law of  \cite{Gupta2000capacity}.}
\end{figure*}
Numerous validation experiments have been performed for several combinations of network sizes, and demand rates. For brevity, network sizes of $n=3\times3$ and $n=18\times18$ are presented and the nodes are arranged in a mesh with dimension $\sqrt{n}$. In accordance with \cite{Gupta2000capacity} messages are generated at each node according to a rate-homogeneous demand. In agreement with assumptions in Section \ref{subsec:Assumptions},  X-Y routing  and 9-TDMA  are employed and source-destination pairs are determined using a uniform random number generator such that they constitute uniformly distributed sessions in the network.

Each session has 100 packets to transfer. Each simulation experiment was run until all packets are delivered to their destination. Furthermore, we considered a transmission channel without packet loss. The simulation experiments were repeated 200 times, each with a new configuration of sessions that are uniformly distributed across the network.

\figurename \ref{fig:boxplotThroughput} shows the simulation results of throughput against the network size. The Box plot denotes the variability of the results in the simulation experiments. The results of aggregate throughput of 200 simulation experiments for each network size are obtained as well. These results are depicted in \figurename \ref{fig:AverageThroughput} for different network sizes using Log-Log plot. The fitting line slope for this plot is 0.4905 that matches with the slope 0.5 of Lemma \ref{Lemma:twouniformly} with a high degree of accuracy.  This result reveals that simulation experiments follow the mathematical analysis in \cite{Gupta2000capacity} with high precision confirming the scaling property of the network throughput.  Finally, the Log-Log plot of standard deviation of the aggregate throughput is shown in \figurename \ref{fig:varianceThroughput}. This result, to the best of our knowledge, has not been reported in earlier studies, revealing the scaling law for the variance of the throughput which may be useful for the potential applicability of WDCs  for streaming and multimedia applications that require stable rates.

\subsection{Throughput for network with heterogeneous rate vector}
In the next step, we consider the previous simulation scenario, with the difference that $n-1$ nodes send $100$ packets, and just one node send $100\times g(n)$ packets, and run for $200$ distinct source-destination configurations.

 \tablename \ref{table:spdpsdxwsewd} reports the fitted line slope for  aggregate throughput. According to Theorem \ref{theorem:soddlwldwd}, for $g(n)=n^{2/3}$ and $g(n)=n^{4/5}$ ($g(n)>>\sqrt{n}$),  aggregate throughput scales like $\frac{n}{g(n)}$. For $g(n)=n^{1/3}$ scenario, throughput scales like $\sqrt{n}$ . As you see in  \tablename \ref{table:spdpsdxwsewd}. the simulation results confirm  Lemma \ref{theorem:sldsdwwdazz} result. 
  \begin{table}
  \caption{Comparing the theoretical results of rate vector $\eta_{n}(1, \ldots, g(n))$ for different $g(n)$ against simulation results}
  \centering
  \renewcommand{\arraystretch}{1.5}
 \begin{tabular}{|c|c|c|c|}\hline
$g(n)$ & $n^{2/3}$ & $n^{1/3}$ & $n^{4/5}$ \\\hline
Theorem \ref{theorem:soddlwldwd} result & $n^{1/3}$ & $n^{0.5}$ & $n^{0.2}$ \\\hline
Simulation result &$n^{0.3300}$ & $n^{0.43}$ &$n^{0.1656}$ \\\hline
Exponent difference & 0.0033 & 0.07 &   0.0344 \\\hline
 \end{tabular}
 \label{table:spdpsdxwsewd}
\end{table}

In the next scenario, all nodes have different demand rate. The demand rate vector $\mathbf{r}=\eta_{n}(\lambda_1, \lambda_2 , \ldots , \lambda_n)$, is according to a heavy-tailed distribution with parameter $\alpha$.  We run each simulation until the last packet is received to its destination.
The simulation results for $\alpha =2$ and $\alpha=5$ for over the $200$ simulations, are shown in \figurename \ref{fig:boxplotThroughputR} and \ref{fig:AverageThroughputR}. In the Theorem \ref{theorem:sdowdlwdw}, we obtained a lower bound for throughput ($\Omega\left(n^{1/2-1/\alpha}\right)$). 
\begin{figure*}
\begin{subfigure}[b]{0.48\textwidth}
\centering
\includegraphics[width=\linewidth]{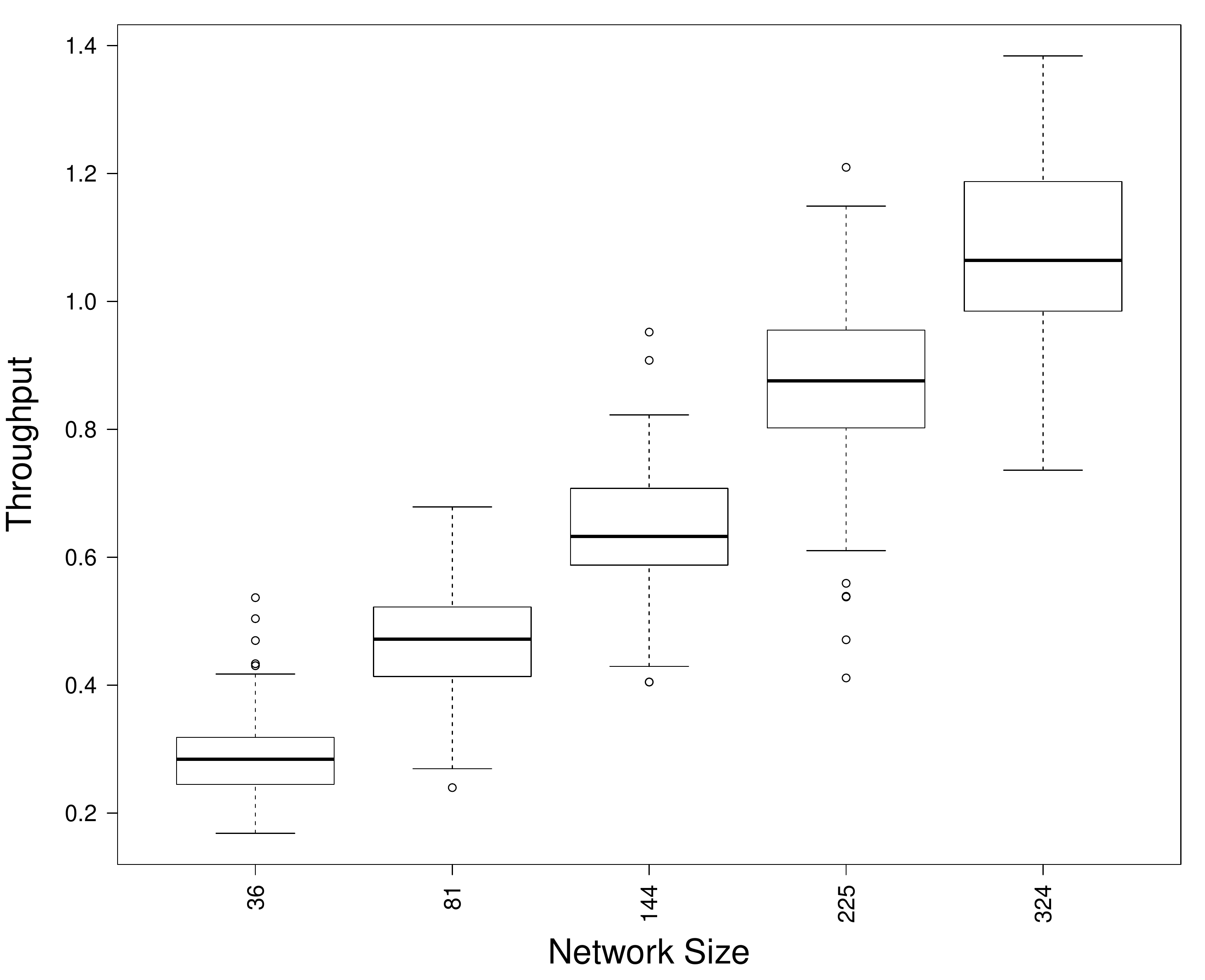}
\caption{}
\label{fig:boxplotThroughputR}
\end{subfigure}
\begin{subfigure}[b]{0.48\textwidth}
\centering
\includegraphics[width=\linewidth]{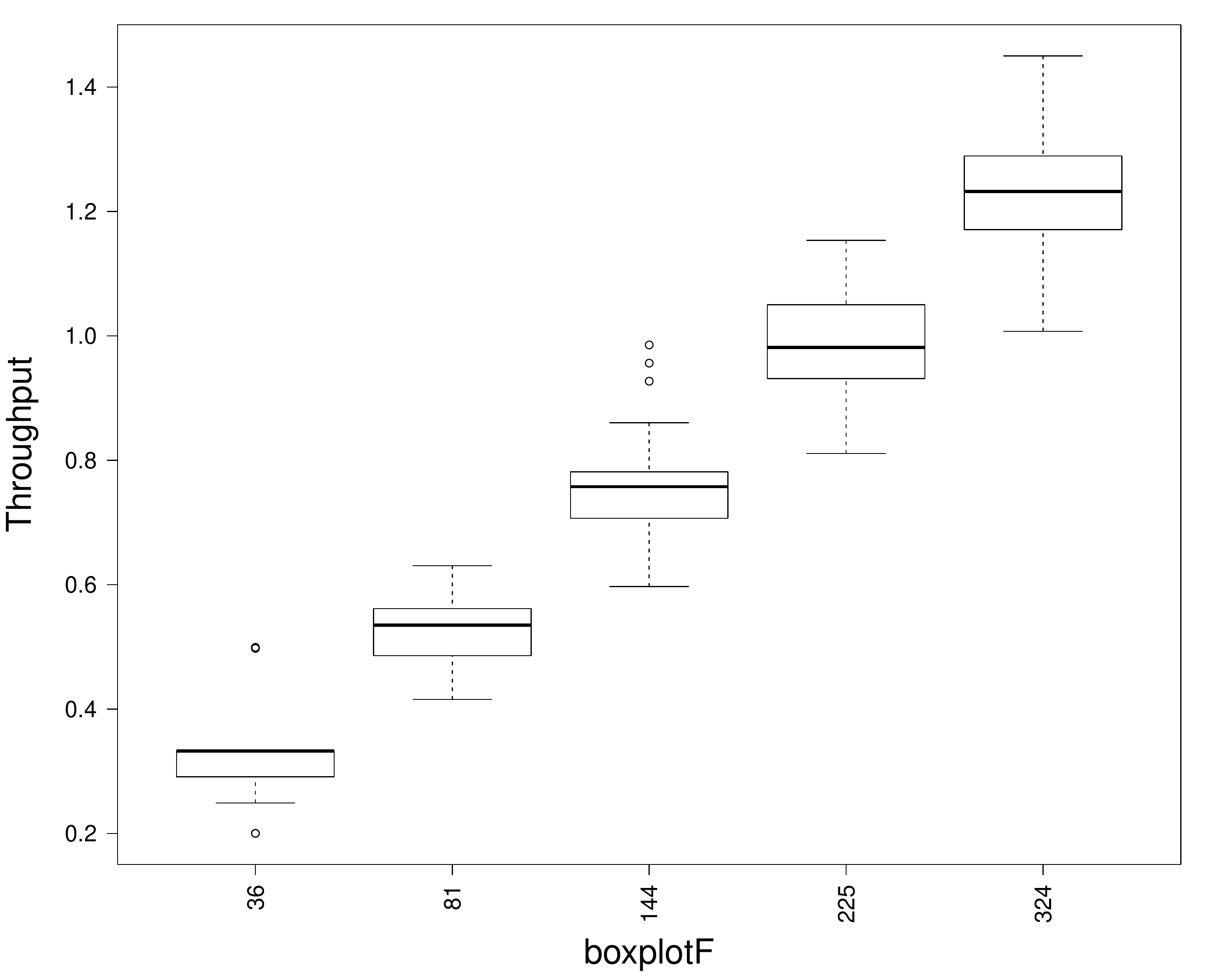}
\caption{}
\label{fig:AverageThroughputR}
\end{subfigure}\\
\begin{subfigure}[b]{0.48\textwidth}
\centering
\includegraphics[width=\linewidth]{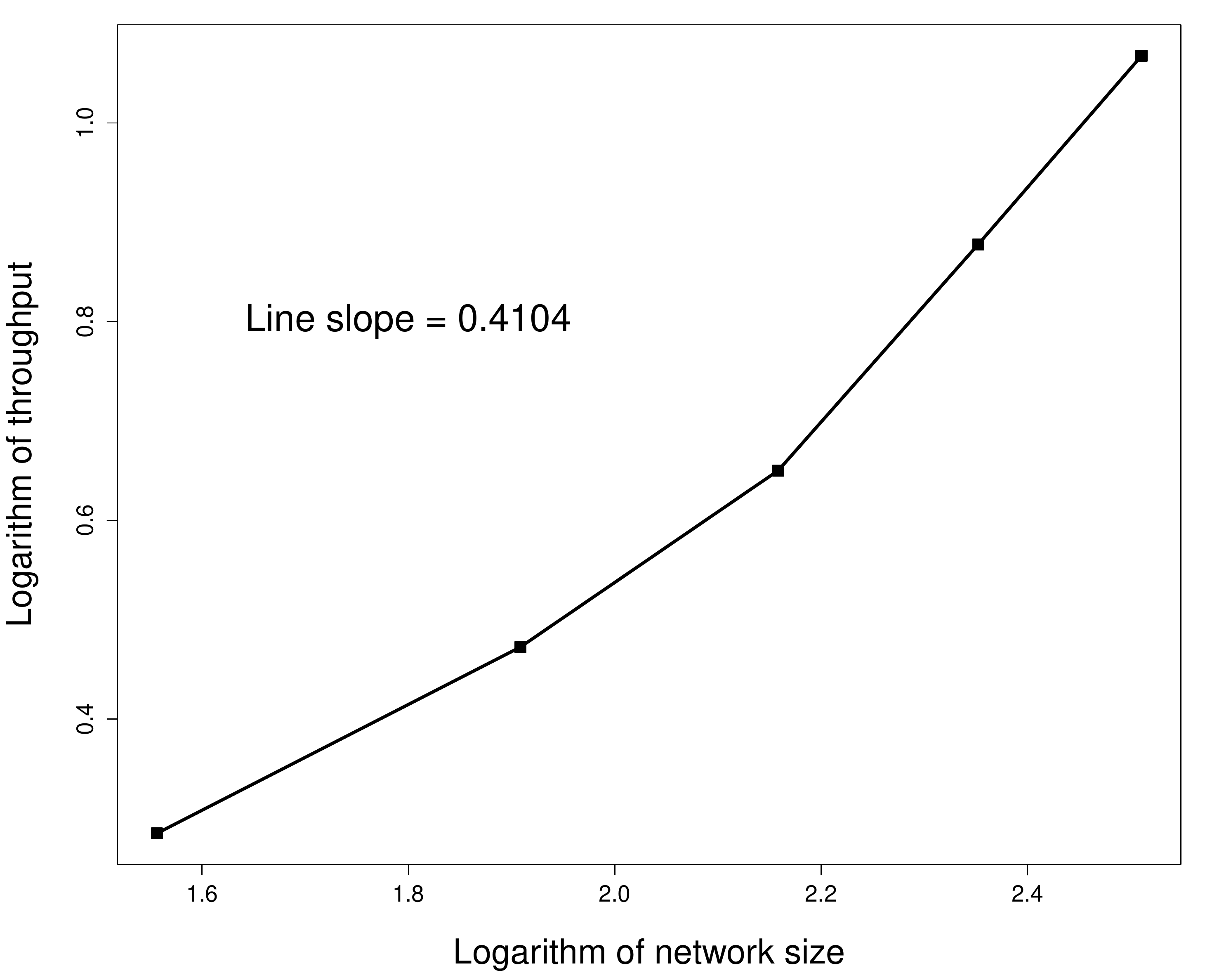}
\caption{}
\label{fig:boxplotThroughputR}
\end{subfigure}
\begin{subfigure}[b]{0.48\textwidth}
\centering
\includegraphics[width=\linewidth]{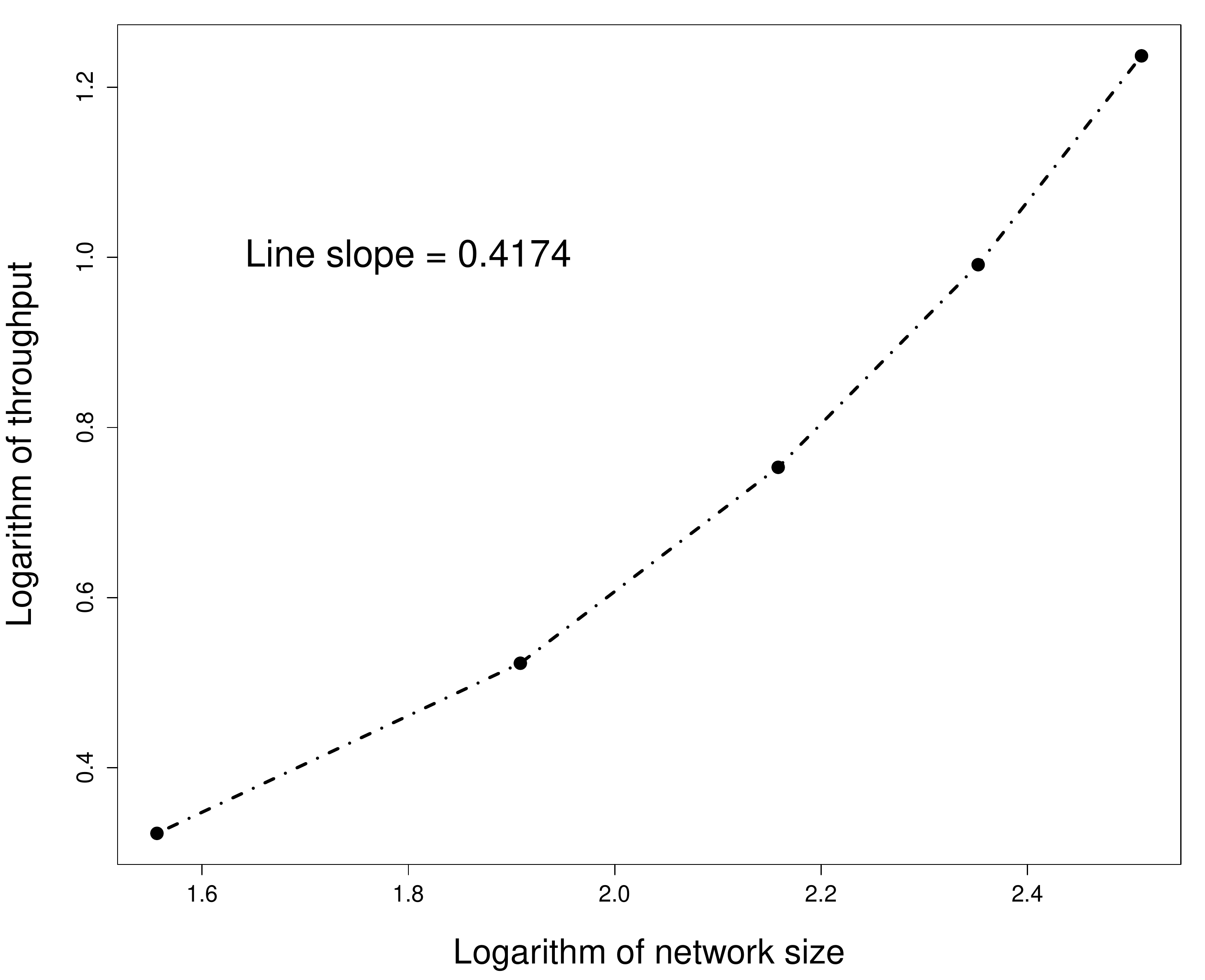}
\caption{}
\label{fig:AverageThroughputR}
\end{subfigure}
\caption{Box plot of simulation results of the throughput against network size for the rate sampled from heavy-tailed distribution with (a) $\alpha=2.5$ (b) $\alpha=5$
Log-Log plot of the simulation results of the throughput against network size for the rate sampled from heavy-tailed distribution with c) $\alpha=2.5$ (d) $\alpha=5$
}
\end{figure*}

\section{Discussion and Future work}
\label{sec:DiscussionFuture}
\begin{figure}
\centering
\includegraphics[width=0.7\linewidth]{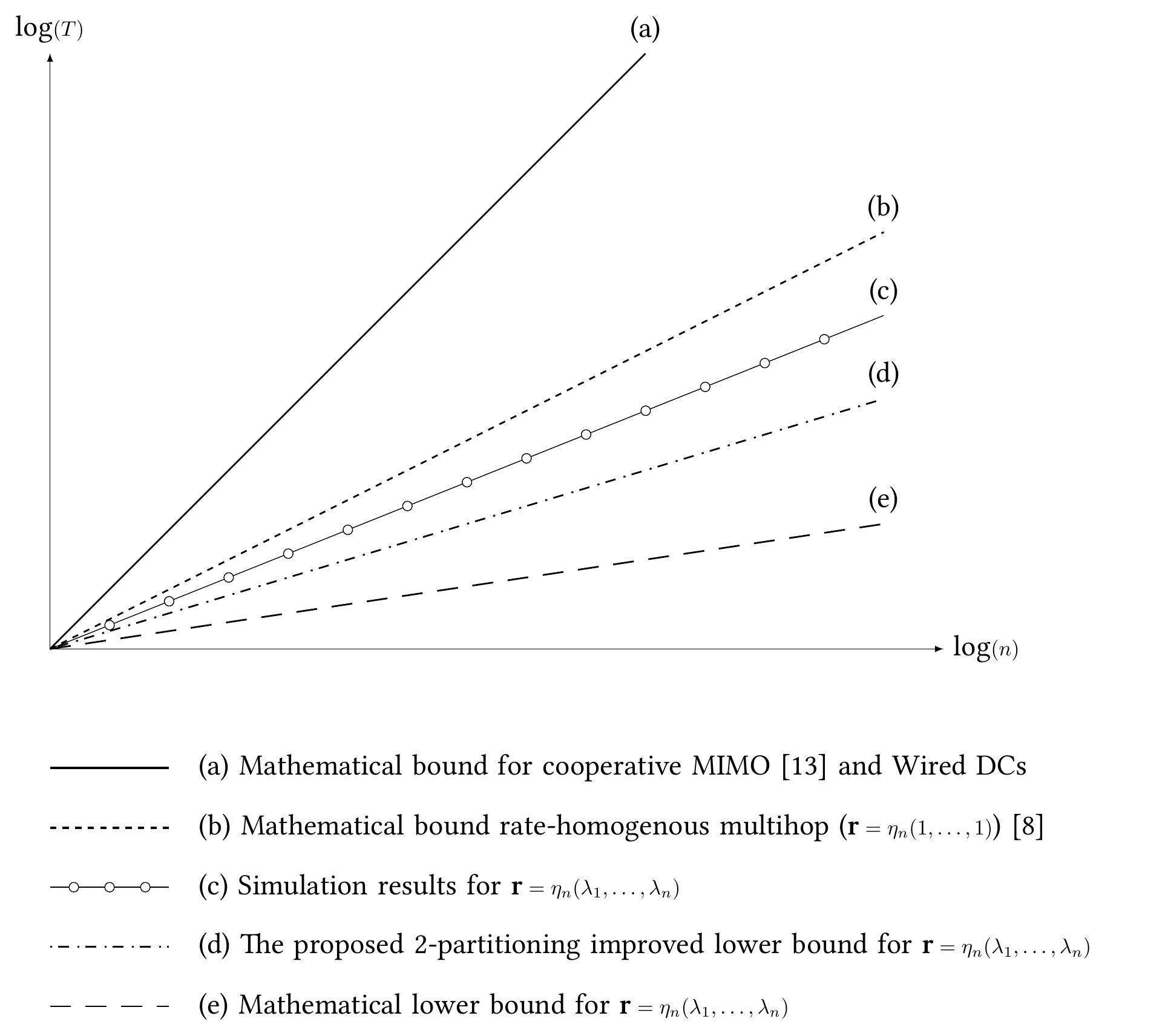}
\caption{The throughput scalability of the wireless network for different communication schemes compared to throughput scalability of wired DCs (theoretical and simulation results)  }
\label{fig:communicationScheme}
\end{figure}
Using any other MACs other than 9-TDMA cannot improve network throughput scaling as a function of the network size. Assuming a hypothetic and ideal MAC that allows every node to send its traffic in every time slot, the achieved throughput under this  ideal MAC is at most  9 times higher than 9-TDMA. This improvement of the throughput  does not depend on network size $n$, but a constant. Omitting this constant factor results in the same throughput scaling as 9-TDMA.  This is true for routing as well. Even, optimal routing improves the throughput by a constant factor, which in turn results in the same throughput scaling as X-Y routing. The above statements have been confirmed in the pioneering work of Gupta and Kumar \cite{Gupta2000capacity}. Thus we require a scale-variant solution other than the above scale-invariant improvements to make the throughput scalable with network size. This diminishes the WDCs dubiousness as a potential candidate for practical purposes. As proved in Section \ref{sec:sdjfekdfke} we noticed that 2-partitioning improves  throughput scaling. 

\figurename \ref{fig:communicationScheme} illustrates the potentials of WDCs conventional multi-hopping network with spatial reuse for different communications schemes. The figure plots logarithm throughput against logarithm network size (Log-Log plot). \figurename \ref{fig:communicationScheme}.(a) shows the linear throughput scaling of Wired DCs. \cite{Ozgur2007Hierarchical} proves that wireless cooperative MIMO under rate-homogeneous achieves this linear throughput scaling as well,  which makes WDCs comparable to novel wired DCs at least in theory.  \figurename \ref{fig:communicationScheme}.(b) illustrates that the throughput of conventional multi-hop with spatial reuse for homogeneous rate $\mathbf{r}=\eta_{n}(1, \ldots, 1)$  scales as $\sqrt{n}$. This matches with the results obtained in \cite{Gupta2000capacity}. 
\figurename \ref{fig:communicationScheme}.(e) depicts the mathematical obtained for the rate-heterogeneous vector demands  ($\mathbf{r} = \eta_{n}(\lambda_{1}, \ldots, \lambda_{n})$). It shows that the rate-heterogeneity which is a requisite in practical DCs exacerbates  the  throughput of conventional multi-hop with spatial reuse  of \figurename \ref{fig:communicationScheme}.(b). This makes the potential  applicability of wireless communication for practical DCs questionable. \figurename \ref{fig:communicationScheme}.(c) depicts simulation results for $\mathbf{r}=\eta_{n}(\lambda_1, \ldots, \lambda_n)$ and validates the mathematical analysis.

Fortunately, as depicted in \ref{fig:communicationScheme}.(d) the 2-partitioning scheme proposed in this paper sheds the light that it is possible to move forward to the bounds presented in \figurename \ref{fig:communicationScheme}.(b). The graphs for fully rate-heterogeneous rate demands 
($\mathbf{r}=\eta_{n}(\lambda_{1}, \ldots, \lambda_{n})$)
for different parameters $\alpha$, behaves as the results presented in \figurename \ref{fig:communicationScheme} and thus we do not illustrate them for brevity. Knowing that the wireless cooperative MIMO under homogeneous rate achieves  linear throughput scaling brings the hope that this solution may be a potential candidate for rate-heterogeneous traffic demands of practical DCs.  The problem of determining the throughput of the wireless cooperative MIMO and its scalability under  rate heterogeneity  is an interesting problem that is yet to be investigated. 

When everything seems satisfactory in terms of throughput scaling, the problem of latency remains still prohibitive. The results presented in Table \ref{table:prevnexthop} reveal that high throughput is achieved in a high amount of time.
We can say that although, increasing spatial reuse, rises the throughput of the network, it grows the number of hops for message delivery and thus translates to high latency. These observations suggest that knowing the insensitivity of throughput scaling  to routing and MAC protocols, efficient design of very dense WDCs require competent design of every layer of protocol stacks and topology so as to handle both high throughput and low latency in order to compete with wired DCs counterparts. 

\section{Conclusion}
\label{sec:conclusion}

We employed the theory of throughput scaling law, which is primarily proposed for  rate-homogeneous demands to investigate the potentials of wireless interconnection to deploy WDCs as a promising solution especially for small to medium scale DCs.  We obtained the standard deviation of the aggregate throughput for this regime. The results are useful for the potential applicability of WDCs for streaming and multimedia applications that require stable rates. The asymptotic throughput of rate-heterogeneous demands is obtained for two different types of heterogeneity in order to study the performance and feasibility of practical DCs that possess rate-heterogeneous traffic nature. The lower bound for the rate vector demands ($\eta_{n}(1, 1, \ldots, g(n))$) is calculated as $\Omega\left(\sqrt{n}/g(n)\right)$. The $\mathbb{E}\left\{\frac{T_1\sum_i{\lambda_i}}{n\max_i{\lambda_i}}\right\}$ lower bound has been obtained for $\mathbf{r}(n) = (r_1(n), r_2(n), \ldots, r_n(n))$ , where the rates are independently sampled from a heavy-tailed distribution. 

Both above results indicate that the throughput of WDCs under the mentioned traffic patterns does not scale with $n$ and thus are not competitive with nowadays wired DCs proposals that possess linear scalability. Thus we proposed a speculative 2-partitioning scheme so as to improve the performance of conventional multi-hopping and then obtained an improved scaling for its throughput. For $\eta_n(1, 1, \ldots , g(n))$, and when $\sqrt{n} << g(n) << n$, the aggregate throughput of our proposed scheme improves to $n/g(n)$.
For $\mathbf{r}(n) = (r_1(n), r_2(n), \ldots , r_n(n))$, where the rates are independently sampled from a heavy-tailed distribution, with parameter $\alpha$, the lower bound of the throughput is improved to $\Omega\left( n^{(\alpha^2+2\alpha-4)/(2\alpha^2+2\alpha)} \right)$. Although the results are promising, they are not comparable to nowadays wired DCs. The effects of MIMO and optimal $k$-partitioning approaches are yet to be investigated to answer the potentials of WDCs as an alternative solution of practical data centers or even interconnection network for general multicomputers.

\appendices
\label{appendix:distribution}
\section{Lemma proofs}
\subsection{Proof  lemma \autoref{Lemma:twouniformly}}
\label{append:Lemma:twouniformly}
	Consider two uniformly randomly chosen nodes on the grid as in \figurename \ref{Fig_Grid_3}. consider a disk around one of them with the radius $n^{1/2-\epsilon}$. Define $n_1$ and $n_2$ to be the number of nodes inside and outside of this disk respectively. Then the probability that the second node is outside of this disk is equal to
\begin{eqnarray}\label{Eq_UThourghput_Lemma_Proof_1}	\frac{n_2}{n}=1-\frac{n_1}{n}&=&1-\frac{\pi\left(n^{1/2-\epsilon}\right)^2 \lambda_0}{n} \\ \nonumber
&=&1-c_0n^{-2\epsilon} \rightarrow 1, \hspace{5mm} \mathrm{as} \hspace{3mm} n \rightarrow \infty
\end{eqnarray}
where $\lambda_0$ and $c_0$ are constants. Thus the second node is, with high probability, outside of this disk. This shows that the average distance of two uniformly randomly chosen nodes on the grid is greater than $n^{1/2-\epsilon}$, for any arbitrarily small $\epsilon$.
\begin{figure}
\centering
\includegraphics[width=0.40\linewidth]{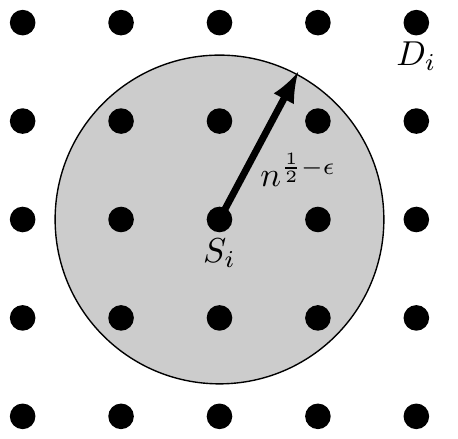}
\caption{Proof Sketch.}
\label{Fig_Grid_3}
\end{figure}

\subsection{Proof  lemma \ref{lemma:inordertoprove}}
\label{append:inordertoprove}
In order to prove this lemma we need the following \emph{extreme order statistics} lemma:
\begin{lemma}
	Consider i.i.d. random variables $\lambda_1,\dots,\lambda_n$ with the c.d.f. indicated in \eqref{Eq_Non_UThourghput_HT_pdf}. Then we will have
\begin{equation}\label{Eq_Non_UThourghput_Lemma_2_Proof_1}
	F_{\lambda}^{n}(b_nx) \rightarrow G(x)
\end{equation}
where $b_n=F^{-1}(1-1/n)$, and
\begin{equation}\label{Eq_Non_UThourghput_Lemma_2_Proof_2}
	G(x)=\exp(-x^{-\alpha}), x\geq 0
\end{equation}
and $F_{\lambda}^{n}(.)$ is the c.d.f. of $\max_i{\lambda_i}$.
\end{lemma}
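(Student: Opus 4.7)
The plan is to verify this classical extreme-value-theory limit by direct computation, since the Pareto CDF in \eqref{Eq_Non_UThourghput_HT_pdf} is simple enough to invert explicitly and admits a closed-form normalization sequence $b_n$.

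First I would identify $b_n$ in closed form. The condition $F_\lambda(b_n)=1-1/n$ reads $1 - b_n^{-\alpha} = 1 - 1/n$, giving $b_n = n^{1/\alpha}$. This matches the scaling already asserted earlier (e.g.\ $\lambda_{(n)}\rightarrow n^{1/\alpha}$ used in the proof of Theorem \ref{theorem:msdsdsdsd}), which is reassuring. Next, since $\max_i \lambda_i \leq y$ iff every $\lambda_i\leq y$ and the $\lambda_i$ are i.i.d., the CDF of the maximum factors as $F_\lambda(y)^n$; this is the object denoted $F_\lambda^n(y)$ in the statement.

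The second step is to substitute. For any fixed $x>0$ and all $n$ large enough that $b_n x = n^{1/\alpha}x \geq 1$, we have
\begin{equation}
F_\lambda^n(b_n x) = \bigl(1 - (b_n x)^{-\alpha}\bigr)^n = \left(1 - \frac{1}{n\,x^\alpha}\right)^n .
\end{equation}
Applying the elementary limit $(1 + c/n)^n \to e^c$ with $c = -x^{-\alpha}$ yields $F_\lambda^n(b_n x) \to \exp(-x^{-\alpha}) = G(x)$. For $x=0$, the claim $G(0)=0$ follows because $b_n\cdot 0 = 0 < 1$ lies below the support of $F_\lambda$, so $F_\lambda^n(0)=0$ identically. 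For very small $x>0$ one has $b_n x < 1$ only for $n < x^{-\alpha}$, so the asymptotic regime is unaffected.

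The one subtlety worth flagging is uniformity: to pass from pointwise convergence to the extreme-value statement used downstream (namely $\lambda_{(n)}/b_n \Rightarrow G$ in distribution and hence $\mathbb{E}\{n^{1/\alpha}/\max_i\lambda_i\}$ converging to a constant, as invoked in Lemma \ref{lemma:inordertoprove}), I would note that $G$ is continuous on $\mathbb{R}_{\geq 0}$ so pointwise convergence of distribution functions at continuity points suffices, and that the tail $(1-1/(nx^\alpha))^n \leq \exp(-x^{-\alpha})$ provides a clean dominating bound that can be used for uniform integrability when converting weak convergence into convergence of the reciprocal expectation. I do not anticipate any real obstacle here; the only place to be careful is the boundary $b_n x < 1$, which is handled by observing that for each fixed $x>0$ it occurs only for finitely many $n$.
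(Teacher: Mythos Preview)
Your proof is correct. The paper itself does not give a proof at all: it simply writes ``For the proof refer to \cite{arnold2008first}'' and defers to the extreme-value-theory literature. Your approach is therefore strictly more informative than the paper's, since you carry out the direct computation that the Pareto form of \eqref{Eq_Non_UThourghput_HT_pdf} makes available: invert $F_\lambda$ to get $b_n=n^{1/\alpha}$, factor the CDF of the maximum as $F_\lambda(\cdot)^n$ by independence, and apply $(1+c/n)^n\to e^c$. What the citation buys is generality (the Fr\'echet limit holds for any regularly-varying tail, not just the exact Pareto), whereas your explicit calculation buys transparency and makes the normalization $b_n=n^{1/\alpha}$---used repeatedly downstream, e.g.\ in Lemma~\ref{lemma:inordertoprove} and Theorem~\ref{theorem:msdsdsdsd}---fall out immediately rather than having to be extracted from the general theory. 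Your remarks on the boundary $b_nx<1$ and on uniform integrability for the reciprocal expectation go beyond what the paper supplies and are appropriate caveats.
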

\begin{lemmaproof}
For the proof refer to \cite{arnold2008first}.

\end{lemmaproof}
Then we will have
\begin{eqnarray}\label{Eq_Non_UThourghput_Th_Proof_3}
	\Pr\{Y<x\} &\triangleq& \Pr\left\{\left(\max_i{\lambda_i}/n^{1/\alpha}\right)<x\right\} \\ \nonumber
&=& \Pr\left\{\left(\max_i{\lambda_i}/b_n\right)<x\right\} \\ \nonumber
&=&\Pr\left\{\max_i{\lambda_i}<b_nx\right\} \\ \nonumber
&=& G(x)
\end{eqnarray}
Thus we will have
\begin{eqnarray}\label{Eq_Non_UThourghput_Th_Proof_4}
\mathbb{E} \left\{ \frac{1}{\left(\max_i{\lambda_i}/n^{1/\alpha}\right)} \right\}&=& \mathbb{E} \left\{ \frac{1}{Y} \right\} \\ \nonumber
&=& \alpha \int_{0}^{\infty}{\frac{\exp(-x^{-\alpha})}{x^{\alpha+2}}\mathrm{d}x} \\ \nonumber
&=& cte.
\end{eqnarray}

\subsection{Proof  lemma \autoref{Lemma:considertheoriginal}}
\label{append:considertheoriginal}
  Divide the original network into cells of $k$ nodes. Suppose we select each node to be in the final network with the probability $m/n$, independently. Then we will show that in each cell at least one node is maintained in the final network, provided that $km/n \rightarrow \infty$. The probability of this event is:
 \begin{eqnarray}\label{Eq_Enhanced_Main_Theorem_Lemma_Proof_1}
 	1-\left(1-\frac{m}{n}\right)^k&=&1-\left(1-\frac{m}{n}\right)^{(n/m)(mk/n)}\\ \nonumber
 &\rightarrow& 1-e^{-km/n} \\ \nonumber
 &\rightarrow& 1
 \end{eqnarray}
 Now, as before we have a grid network of $m$ cells and can operate it to arrive at the throughput $\sqrt{m}$.

 \subsection{Proof  lemma \autoref{Lemma:assumethatFalk1989}}
 \label{append:assumethatFalk1989}
 Consider the following lemma:
 \begin{lemma} [Falk, 1989] \label{Lemma_Falk}
 Assume that $X_1,X_2,\dots,X_n$ are i.i.d. random variables with the c.d.f. $F(x)$. Define $X_{(1)},X_{(2)}, \dots, X_{(n)}$ to be the order statistics of $X_1,X_2,\dots,X_n$. If $i \rightarrow \infty$ and $i/n \rightarrow 0$ as $n \rightarrow \infty$, then there exist sequences $a_n$ and $b_n>0$ such that
 \begin{equation}\label{Eq_Enhanced_Main_Theorem_Lemma_Proof_3}
 	\frac{X_{(n-i+1)}-a_n}{b_n} \Rightarrow \mathrm{N}(0,1),
 \end{equation}
 where $\Rightarrow$ denotes convergence in distribution, and $\mathrm{N}(0,1)$ is the Normal distribution with zero mean and unit variance. Furthermore, one choice for $a_n$ and $b_n$ is:
 \begin{eqnarray}\label{Eq_Enhanced_Main_Theorem_Lemma_Proof_4}
 	a_n=F^{-1}\left(1-\frac{i}{n}\right), \;\;\;	b_n=\frac{\sqrt{i}}{nf(a_n)}.
 \end{eqnarray}
 \end{lemma}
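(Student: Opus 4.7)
The plan is to reduce the statement to the order statistics of i.i.d.\ Uniform$(0,1)$ variables via the probability integral transform, establish a central limit theorem for the uniform intermediate order statistic using R\'enyi's exponential representation, and finally transfer the result back to the $X$ scale through a first-order expansion of $F^{-1}$.

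First I would set $U_j = F(X_j)$ so that $U_1,\dots,U_n$ are i.i.d.\ Uniform$(0,1)$; assuming $F$ is continuous and strictly increasing at the relevant point, $X_{(n-i+1)} = F^{-1}(U_{(n-i+1)})$ almost surely. Here $U_{(n-i+1)}$ has the Beta$(n-i+1,\,i)$ distribution, with mean $(n-i+1)/(n+1) \sim 1 - i/n$ and variance $\sim i/n^{2}$ under $i/n \to 0$. Using R\'enyi's representation $U_{(k)} \stackrel{d}{=} S_k/S_{n+1}$ with $S_k = E_1 + \cdots + E_k$ and $E_j$ i.i.d.\ Exponential$(1)$, a short calculation isolates the dominant stochastic term
\begin{equation}
U_{(n-i+1)} - \left(1 - \tfrac{i}{n}\right) = -\frac{T - i}{n} + o_P\!\left(\frac{\sqrt{i}}{n}\right),
\end{equation}
where $T = E_{n-i+2} + \cdots + E_{n+1}$ is a sum of $i$ i.i.d.\ Exponential$(1)$ variables, independent of $S_{n-i+1}$. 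The classical CLT applied to $T$ then gives
\begin{equation}
\frac{U_{(n-i+1)} - (1 - i/n)}{\sqrt{i}/n} \;\Rightarrow\; \mathrm{N}(0,1).
\end{equation}

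Next I would transfer this convergence to the $X$ scale by Taylor-expanding $F^{-1}$ around $1 - i/n$. Using $(F^{-1})'(u) = 1/f(F^{-1}(u))$, one writes
\begin{equation}
X_{(n-i+1)} - a_n \;=\; \frac{U_{(n-i+1)} - (1 - i/n)}{f(a_n)} + R_n,
\end{equation}
where $a_n = F^{-1}(1 - i/n)$ and $R_n$ is controlled by the second derivative of $F^{-1}$ between the endpoints. Dividing through by $b_n = \sqrt{i}/(n\,f(a_n))$ and applying Slutsky's theorem yields the stated convergence, provided $R_n/b_n \to 0$ in probability.

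The main obstacle is controlling $R_n$ in the intermediate-order regime: the centering point $a_n$ drifts toward the upper endpoint of the support of $F$, so $f(a_n)$ may shrink and $(F^{-1})''$ may blow up, and the standard delta method does not apply off the shelf. The cleanest workaround (and essentially the route Falk follows) is to bypass the Taylor step and invert the Beta tail directly: for fixed $x$, write
\begin{equation}
\Pr\{X_{(n-i+1)} \le a_n + b_n x\} \;=\; \Pr\{U_{(n-i+1)} \le F(a_n + b_n x)\},
\end{equation}
approximate $F(a_n + b_n x) \approx (1 - i/n) + b_n x\, f(a_n)$ to first order, and substitute into the uniform CLT above. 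The delicate point is that the two rates $i\to\infty$ (needed for the CLT) and $i/n\to 0$ (needed for the first-order expansion of $F$ near $1$ to dominate higher-order corrections) must hold simultaneously, and their interplay is exactly what governs the mild regularity assumptions on $F$ under which the statement is true.
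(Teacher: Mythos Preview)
Your sketch is substantively correct and follows the classical route to intermediate order statistic asymptotics: reduce to uniforms via the probability integral transform, obtain the CLT for $U_{(n-i+1)}$ (through R\'enyi's representation or directly from the Beta$(n-i+1,i)$ moments), and then pull back to the $X$ scale via a first-order expansion of $F^{-1}$ at $1-i/n$. You also correctly flag the only real subtlety, namely that the delta-method remainder $R_n/b_n$ requires some regularity of $f$ near the upper endpoint, which is precisely the content of the von Mises--type conditions under which Falk's theorem is stated.

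By contrast, the paper does not give a proof at all: it simply cites the textbook of Arnold, Balakrishnan and Nagaraja and moves on. So there is nothing to compare at the level of argument; you have supplied what the paper outsources. If anything, your write-up is more informative than the paper's treatment, since it makes explicit where the two rate conditions $i\to\infty$ and $i/n\to 0$ enter (the former for the CLT on $T$, the latter so that the linearization of $F$ near the quantile dominates higher-order terms). The only caveat is that, as you note yourself, the unqualified statement in the lemma suppresses the smoothness hypotheses on $F$ needed for the remainder control; your sketch would become a full proof once those are spelled out and $R_n/b_n \to 0$ in probability is verified under them.
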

 \begin{lemmaproof}
 For the proof refer to \cite{arnold2008first}
 
 \end{lemmaproof}
 By applying the above lemma to the distribution specified in \eqref{Eq_Non_UThourghput_HT_pdf}, and putting $i=m$, we will have:
 \begin{eqnarray}\label{Eq_Enhanced_Main_Theorem_Lemma_Proof_5}
 	a_n &\rightarrow& \left(\frac{n}{m}\right)^{1/\alpha} \\ \nonumber
 	\frac{b_n}{a_n}&=&m^{-1/2} \rightarrow 0
 \end{eqnarray}

\bibliographystyle{ieeetr}
\bibliography{library}

\end{document}